\setlist{leftmargin=1.2em, itemsep=2pt, topsep=2pt,parsep=1pt}
\newtheorem{definition}{Definition}[section]
\newtheorem{proposition}[definition]{Proposition}
\newtheorem{theorem}[definition]{Theorem}
\newtheorem{example}[definition]{Example}
\DeclareMathOperator*{\avg}{Avg}
\newcommand{\eat}[1]{}
\newcommand{\rbox}{\hfill $\Box$}
\newcommand{\ie}{{\em i.e.}}
\newcommand{\eg}{{\em e.g.}}
\newcommand{\observe}{{\cal O}}
\newcommand{\model}{\textsc{Hybrid}}  
\newcommand{\fr}[1]{{\color{red} [furong: #1]}}
\newcommand\Mark[1]{\textsuperscript#1}
\begin{document}

\doi{}

\isbn{}

	\setlength{\abovedisplayskip}{0pt}
	\setlength{\belowdisplayskip}{0pt}
	\setlength{\textfloatsep}{14pt minus 4 pt} 
	\setlength{\intextsep}{8pt minus 2 pt} 


\title{Discovering Multiple Truths with a Hybrid Model}

\author{
\alignauthor
Furong Li\Mark{\dag} ~~~~~~~~~~~ Xin Luna Dong\Mark{\ddag} ~~~~~~~~~~~ Anno Langen\Mark{\ddag} ~~~~~~~~~~~ Yang Li\Mark{\ddag}\\
\affaddr{\Mark{\dag}National University of Singapore~~~~~~~~~~~~~~~~~~~~~~~~~~~~~~~~~~\Mark{\ddag}Google Inc., Mountain View, CA, USA}  \\ 
\email{\hspace{2mm}furongli@comp.nus.edu.sg~~~~~~~~~~~~~~~~~~~~~~~~~~~\{lunadong, arl, ngli\}@google.com}
}

\maketitle

\begin{abstract}
Many data management applications require integrating information from multiple sources.
The sources may not be accurate and provide erroneous values.
We thus have to identify the true values from conflicting observations made by the sources.
The problem is further complicated when there may exist multiple truths (\eg, a book written by several authors).
In this paper we propose a model called \model{} that jointly makes two decisions: {\em how many truths there are}, and {\em what they are}. 
It considers the conflicts between values as important evidence for ruling out wrong values, while keeps the flexibility of allowing multiple truths.
In this way, \model{} is able to achieve both high precision and high recall.
\end{abstract}

\section{Introduction}
\label{multi:sec:intro}

When consolidating information from different sources, we may observe different values provided for the same entity.
Consequently, we need to identify the correct values from conflicting observations made by multiple sources, which is known as the {\em data fusion} (or {\em truth discovery}) problem~\cite{fusionsurvey,fusionsurvey15}.
We illustrate the problem using the example below.

\begin{example}
	Table~\ref{multi:tbl:triples} shows the information collected from three sources regarding equipments of two winter sports: {\em ice hockey} and {\em snowboarding}.
	We can see that four different values are provided for the entity {\em ice hockey} ({\em helmet, stick, boots} and {\em skis}), while only the first two are correct.
	The goal of the truth discovery problem is to identify the correct values from Table~\ref{multi:tbl:triples}.
	\label{multi:exa:problem}
\end{example}

The simplest solution to the truth discovery problem is {\em majority vote}: consider the value provided by the largest number of sources as the truth.
For example, for the entity {\em snowboarding}, we select {\em board} as the truth since it is provided by two sources while {\em neck guard} is only provided by one source.
However oftentimes different sources have different qualities, and one may want to distinguish them.
The authors of \cite{truthFinder,accu} measure the quality of a source $s$ by its {\em accuracy}, which is the probability that a value provided by $s$ is correct.
Then the truth can be decided through a {\em weighted vote}, where a source with higher accuracy is assigned to a higher weight;
the value with the highest vote is selected as the truth.
The intuition behind this approach is that values provided by more accurate sources are more likely to be true.

\begin{table}[h]
	\centering
	\caption{Information collected from different sources regarding equipments of various winter sports.
		$\surd/\times$ indicates the correctness.}
	\begin{tabular}{@{} r l l l l} \cline{2-5}
		& entity & attribute & value & sources \\ \cline{2-5}
		$\surd~ o_1$ & ice hockey & equipments & helmet & $s_1, s_3$\\ 
		$\surd~ o_2$ & ice hockey & equipments & stick & $s_1, s_2$\\  
		$\times~ o_3$ & ice hockey & equipments & boots & $s_2$\\ 
		$\times~ o_4$ & ice hockey & equipments & skis & $s_3$\\ \cline{2-5}
		$\surd~ o_5$ & snowboarding & equipments & board & $s_2, s_3$\\ 
		$\times~ o_6$ & snowboarding & equipments & neck guard & $s_1$\\ \cline{2-5}
	\end{tabular} 
	\label{multi:tbl:triples}
\end{table}

The limitation of the above methods is that when multiple truths exist, 
they at best find one of them while missing the rest. 
We thus refer to them as {\em single-truth} approaches.
While truth discovery algorithms usually compute a probability $p(v)$ of each value $v$ being true, in single-truth approaches, the probabilities of all values sum up to 1 since they assume there is only one true value.

\begin{example}
	We use {\sc Accu}~\cite{accu} as a representative of single-truth approaches, and then compute the probabilities of the values provided from ice hockey equipments.
	Assuming all sources have the same accuracy 0.6, we obtain the probabilities in Table~\ref{multi:tbl:prob_example} (see the first line).
	We observe that the probabilities of the four values add up to 1, so even the true values ({\em helmet} and {\em stick}) have rather low probabilities.
	\label{multi:exa:single}
\end{example}

\begin{table}
	\centering
	\caption{Value probabilities computed by different approaches for {\em (ice hockey, equipments)}.}
	\begin{tabular}{@{}lrrrr} \hline
		& helmet & stick & boots & skis \\ \cline{2-5}
		Single-truth~\cite{accu} & 0.47 & 0.47 & 0.03 & 0.03 \\ 
		Multi-truth~\cite{precrec} & 0.63 & 0.63 & 0.54 & 0.54 \\ 
		{\sc Hybrid} & 0.92 & 0.92 & 0.08 & 0.08 \\ \hline
	\end{tabular} 
	\label{multi:tbl:prob_example}
\end{table}

To address the above problem, {\em multi-truth} approaches~\cite{LTM,precrec} have been proposed recently.
They compute the probability for each value separately, and thus do not require the probabilities of all values sum up to 1.
Instead, they compute both the probability $p(v)$ of $v$ being true, and the probability $p(\neg v)$ of $v$ being false, where $p(v)+p(\neg v)=1$.
Then a value $v$ is consider true if $p(v) > p(\neg v)$, that is, $p(v) > 0.5$.

An {\em unknown semantics} is used to capture the nature of multi-truth: if a source $s$ does not provide the value $v$,
$s$ means that it {\em does not know} whether or not $v$ is correct (instead of saying $v$ is {\em incorrect}).
Thus, apart from {\em accuracy} (also called {\em precision} in some methods), 
multi-truth approaches also measure the quality of a source $s$
by its {\em recall}, the probability that a truth is provided by $s$.
Intuitively, values provided by high-precision sources are likely to be true (\ie, a higher $p(v)$), 
and values not provided by high-recall sources are likely to be false (\ie, a higher $p(\neg v)$).
In this way, they derive $p(v)$ and $p(\neg v)$ from the precision and recall of the sources, and then normalize with the equation $p(v)+p(\neg v)=1$.

\begin{example}
	Table~\ref{multi:tbl:prob_example} also shows the value probabilities computed by the multi-truth approach {\sc PrecRec}~\cite{precrec}.
	Assuming a precision of 0.6 and recall of 0.5 for each source, {\sc PrecRec} will decide
	that all provided values are true, resulting in false positives (\ie, {\em boots} and {\em board}).
	\label{multi:exa:multi}
\end{example}

In practice, even multi-truth items often have only a few true values instead of infinite number of truths. 
Existing methods~\cite{LTM,precrec} cannot capture this because they decide the truthfulness of each value independently, 
without considering other values provided for the entity and thus lack a global view of the entity. 
As a result, they suffer from low precision when the sources have low coverage or noisy observations (as shown later in Section~\ref{multi:sec:exp}).

In this paper we introduce a new solution to the truth discovery problem, called \model{}, which works for multi-truth applications.
Based on the values provided for an entity, {\sc Hybrid} makes two decisions: (i) how many truths there are, and
(ii) what they are. 
Essentially it interleaves the two decisions and finds the truths one by one.
Conditioning on a sequence of true values that have been selected previously, it computes (1) the probability of a value $v$ being the next truth, and (2) the probability that there is no more truth.
In this way, {\sc Hybrid} combines the flexibility of the multi-truth approaches of allowing multiple truths for an entity, 
and the inherent strength of the single-truth approaches of considering conflicts between values
as important evidence for ruling out wrong values.
Therefore, it obtains both high precision and high recall.

Note that the {\em multi-truth} setting should be considered more general than the {\em single-truth} setting, since it allows for the existence of multiple truths (but not necessarily).
Our proposed method also works for entities with a single-truth because it can automatically decide the number of truths.
Although one can easily extend a single-truth approach to handle multi-truth applications by setting a threshold (\ie, consider all values with predicted probabilities over $\lambda$ as true), it is hard to find a threshold that works for all entities.
We discuss a slightly more sophisticated extension in Section~\ref{multi:sec:hybrid}.

\section{Definitions and Notations}
\label{multi:sec:notation}

\noindent{\bf Data item, value, source, observation.}
We call an (entity, attribute) pair a {\em data item}, denoted by $d$.
Then we have a set $\cal S$ of sources provide values on $d$.
Let $v$ be a value provided by a source $s \in {\cal S}$ for the data item $d$, the pair $(d, v)$ is then called an {\em observation} of $s$. 
For instance, there are two data items in Table~\ref{multi:tbl:triples}: $d_1 =$ (ice hockey, equipments) and $d_2 =$ (snowboarding, equipments);
there are 4 values provided  for $d_1$ and 2 values provided for $d_2$.
In total we have 6 observations made by 3 sources $\{s_1, s_2, s_3\}$.

Given a data item $d$, we use $\Psi$ to denote the set of observations made on $d$ (we dismiss $d$ in the notation for simplicity); then $\Psi(s)$ denotes the values from $s$.
For example in Table~\ref{multi:tbl:triples}, for the item (ice hockey, equipments), $\Psi(s_1) = \{ {\rm helmet, stick} \}$.
Table~\ref{multi:tbl:notation} summarizes the notations we use in this chapter. 

\begin{table} 
	\centering
	\caption{Table of notations.}
	\begin{tabular}{ l l } \hline
		Notation & Description \\ \hline
		$d$ & a data item \\ 
		$v$ & a value  \\ 
		$s$ & a source that provides values \\ 
		${\cal S}_v$ & the set of sources that provide the value $v$ \\ 
		$\Psi$ & the mapping between sources and their provided values\\ 
		$\Psi(s)$ & the set of values provided by $s$ on a data item\\ 
		$n$ & the number of wrong values in the domain\\
		$\observe$ & a sequence of values that have been selected as truths\\
		$\perp$ & ``there is no more truth'' \\	\hline
	\end{tabular} 
	\label{multi:tbl:notation}
\end{table}

\smallskip \noindent
{\bf Problem definition.}
{\em Given a data item $d$ and a set ${\cal S}$ of sources, let $\cal V$ denote the set of values provided by $\cal S$ on $d$.
Our goal is to compute a probability $p(v | \Psi)$ for each value $v \in \cal V$ being true based on the source observations. }\smallskip

In this chapter we focus on the case where
the sources are independent of each other; we can extend our model with techniques 
from~\cite{accu,precrec} to address correlations between sources. 

\section{A Hybrid Model}
\label{multi:sec:hybrid}

This section presents a truth discovery model, called {\sc Hybrid}, which allows for the existence of multiple truths.
Essentially, {\sc Hybrid} makes two decisions for a data item $d$: (i) how many truths there are, and
(ii) what they are. 

One can imagine a natural solution that processes in two steps: 
(1) decide the number of truths $k$ with a single-truth method,
treating ``the number of truths for $d$'' as a data item and
$|\Psi(s)|$ as the value provided by $s$; 
(2) apply the single-truth method on $\cal V$ and select the values with top-$k$ probabilities as the truths. 
Although this approach often outperforms both the existing single-truth and multi-truth approaches (as we shall show later in Section~\ref{multi:sec:exp}), 
it has two problems. First, it does not update the value probabilities
according to its belief of the number of truths (all probabilities still sum up to 1). Second, separating the decisions into two steps may hurt precision when many sources provide more values than the truths: once the first step decides the number of truths $k$,
the second step will fill in $k$ values, possibly with values lacking strong support from the sources.

Different from the above baseline approach, {\sc Hybrid} combines the two steps and finds the truths one by one.
Conditioning on a sequence $\observe$ of true values that have been selected previously, 
it decides (1) the probability of a value $v$ being the next truth, denoted by $p(v|\observe, \Psi)$, 
and (2) the probability that there is no more truth, denoted by $p({\perp}|\observe, \Psi)$.
These are disjoint decisions so their probabilities sum up to 1. 
Thus, when {\em selecting the next truth}, {\sc Hybrid} basically applies a single-truth method. 

However, when deciding {\em whether there is any more truth} (\ie, $p({\perp}|\observe, \Psi)$),
{\sc Hybrid} incorporates the {\em unknown semantics} used in multi-truth approaches: 
if a source provides 2 values for an item $d$, it claims that it knows 2 values of $d$, 
instead of claiming that $d$ has {\em only} 2 values. 

In this way, {\sc Hybrid} combines the flexibility of the multi-truth methods of allowing multiple truths for a data item, 
and the inherent strength of the single-truth methods of considering conflicts between values
as important evidence for ruling out wrong values.
Therefore, it obtains both high precision and high recall.

Moreover, {\sc Hybrid}  leverages the typical number of truths for each type of data items;
for example, a person typically has 2 parents and 1-5 children. {\sc Hybrid} allows incorporating 
such knowledge as the {\em a priori} probability of $p({\perp}|\observe, \Psi)$, which 
further improves performance. Bear in mind that  {\em a priori} probabilities have much less
effect than observations on computing {\em a posteriori} probabilities, so {\sc Hybrid} applies 
a {\em soft} constraint rather than a {\em hard} one.

We next describe the {\sc Hybrid} model in more details, and answer the following question: {\em 
	since there should not exist any ordering between the truths, how would {\sc Hybrid} avoid the consequence of finding the truths one by one}?

\subsection{Overall Probability of a Value}
\label{multi:sec:model}

Consider computing $p(v|\Psi)$ for a value $v \in {\cal V}$. As we select truths one by one, there can be various sequences
of truths (of any length below $|{\cal V}|$) that are selected before $v$. We call each sequence $\observe$
a {\em possible world} and denote by $\Omega$ all possible worlds. Then
the probability of $v$ is the weighted sum of its probability in each possible world:
\begin{equation}
p(v | \Psi) = \sum_{\observe \in \Omega} p(v | \observe, \Psi) \cdot p(\observe | \Psi).
\label{multi:eq:valuepr}
\end{equation}
where $p(\observe | \Psi)$ is the probability of entering the possible world $\observe$.

Let $\observe = v_1v_2\dots v_{|\observe|}$, $v \notin \observe$,
denote a possible world with the sequence $v_1, v_2,\dots, v_{|\observe|}$ of values selected as truths.
Let $\observe_j$ denote a prefix of $\observe$ with length $j$ and $\observe_0=\varnothing$. 
Applying the chain rule leads us to:
\begin{equation}
p(\observe | \Psi) = \prod_{j=1}^{|\observe|} p(v_j | \observe_{j-1}, \Psi).
\label{multi:eq:chainRule}
\end{equation}

Now the only piece missing from Eq.s~\eqref{multi:eq:valuepr}-\eqref{multi:eq:chainRule} is the conditional probability
$p(v | \observe, \Psi)$, which we describe in the next subsection.

\eat{
	\begin{figure}
		\centering
		\includegraphics[width=0.43\textwidth]{full_tree.pdf} 
		\caption{All possible worlds considered in computing $p(o_2)$ in Table~\ref{multi:tbl:candidates}. We omit
			observations without any source for simplicity.}
		\label{multi:fig:full_tree}
	\end{figure}
	
	One way to think about the computation of $p(v|\Psi)$ is through a tree structure (See Figure~\ref{multi:fig:full_tree}
	as an example for $o_2$ in Table~\ref{multi:tbl:candidates}).
	The root of the tree represents having not selected any value. 
	A path from the root to a node $v$ represents a possible way to select $v$;
	for example, the path $o_1$-$o_4$-$o_2$ corresponds to the case where we select $o_2$ after selecting $o_1$ and  $o_4$  sequentially ($\observe{=}o_1o_4$).
	The children of a node represent candidates for the next truth,
	containing all unselected values in $\cal V$ and $\perp$\footnote{\small For simplicity, we assume
		there is at least one truth, so $\perp$ is not a child of the root. We can relax this assumption
		by adapting the solution in~\cite{truthExistence}.}.
	For example, under the path $o_1$-$o_4$, $o_4$ has 3 children: $o_2$, $o_3$, and $\perp$. 
	The node $\perp$ has no children.
	
	The number under each node $v$ is the probability $p(v|\observe, \Psi)$ of selecting $v$
	conditioning on the sequence $\observe$ of values that are selected previously (its computation is explained later).
	For example, following the path $\varnothing$-$o_1$-$o_4$-$o_2$, we have $p(o_2|o_1o_4, \Psi)=0.23$. 
	The {\em probability of a path} is the product of the numbers on the path, as in Eq.~\eqref{multi:eq:chainRule}.
	For example, $p(o_1o_4o_2| \Psi)=0.47\times 0.06\times 0.23=0.007$.
	The probability of $v$ is thus the sum of the probabilities
	of all paths ending with $v$, according to Eq.~\eqref{multi:eq:valuepr}. 
	In our example, we can reach $o_2$ through 16 paths, and we obtain $p(o_2)=0.92$. 
	
	\fr{added this}
	Table~\ref{multi:tbl:prob_example} compares the probabilities computed by different fusion models on the date item {\em (ice~hockey, equipment)}.
	We can see that {\sc Hybrid} clearly gives high probabilities for true values and low probabilities for false values.
}

Back to the question we asked previously, even though {\sc Hybrid} finds truths one by one, 
it is order-independent as it considers all possible ways to select a value and computes an overall probability.
Apparently enumerating all possible worlds is prohibitively expensive;
we describe a polynomial-time approximation in Section~\ref{multi:sec:greedy}.

\subsection{Conditional Probability of a Value}
\label{multi:sec:cond_pr}
Now consider computing $p(v|\observe, \Psi)$.
Under a possible world $\observe$, we either choose one of the remaining values as the next truth or decide that there is no more truth, thus 
$\sum_{v' \in {\cal V}\setminus\observe}p(v'|\observe, \Psi) + p({\perp}|\observe, \Psi) = 1$;
this is similar to what we have in single-truth approaches.
Then according to the Bayes Rule, we have
\begin{equation} 
\label{multi:eq:bayes}
p(v | \observe, \Psi) {=} {p(\Psi | \observe, v) p(v|\observe) \over 
	\sum\limits_{v' \in {\cal V}\setminus\observe}p(\Psi|\observe, v')p(v'|\observe) + p(\Psi|\observe, \perp)p({\perp}|\observe)}.
\end{equation}
Here the inverse probability $p(\Psi | \observe, v)$ is the probability of observing $\Psi$ if $v$ is the next truth.
The {\em a priori} probability $p(v|\observe)$ is the probability of $v$ being the next truth regardless of the observations $\Psi$.
The same applies to $p(\Psi | \observe, \perp)$ and $p({\perp}|\observe)$.

Before we can compute these two sets of probabilities, we first define the metrics that are used to measure the quality of a source.

\subsubsection{Source-quality metrics} 
Imagine that there are $m$ latent slots for truths of a data item, and a source $s$ is asked to fill the slots.
The number of slots is unknown to $s$, so it iteratively performs two tasks: predict whether there exists another slot (\ie, another truth),
and if so, fill the slot with a value. 
We thus capture the quality of a source with two sets of metrics:
that for deciding whether there exists a truth, and that for deciding the true values.

The first set of metrics enables the {\em unknown semantics} for multi-truth, and it includes two measures:
\begin{itemize}
	\item {\em Precision} $P(s)$, the probability that when $s$ provides a value, there indeed exists a truth;
	\item {\em Recall} $R(s)$, the probability that when there exists a truth, $s$ provides a value.
\end{itemize}
Note that our $P(s)$ and $R(s)$ are different from the same notions in \cite{precrec}:
we only measure how well $s$ predicts whether or not there exists a truth, but not how well $s$ predicts what the truth is; 
in other words, we do not require the value provided by $s$ to be the same as the truth. 
To facilitate later computations, we next derive the {\em false positive rate} of $s$, denoted by $Q(s)$, from $P(s)$ and $R(s)$ by applying the Bayes Rule (see \cite{precrec} for details):
\begin{equation}
\label{multi:eq:prq}
Q(s)={\alpha \over 1-\alpha} \cdot {1-P(s) \over P(s)} \cdot R(s),
\end{equation}
where $\alpha$ is the {\em a priori} probability\footnote{\small Previous work~\cite{precrec}
	has shown that {\em a priori} probabilities play a minor role on final results comparing with the source observations.}
that a provided value corresponds to a truth slot.
Intuitively, $Q(s)$ is the probability that $s$ still provides a value when there is no truth slot.

The second set of metrics follows single-truth models to address the conflicts between values. 
It contains one measure: {\em accuracy} $A(s)$, the probability that a value provided by $s$ for a ``real'' truth slot is true (\ie, $s$ provides a true value after it has correctly predicted the existence of a truth slot).
Note that values provided for non-existing truth slots, which are absolutely false, are not counted here,
as they have been captured by $P(s)$. 

We describe how we compute these metrics in the next subsection, and demonstrate the basic idea of them
in the example below.

\begin{example}
	\label{multi:ex:src_quality}
	Consider the source $s_2$ and the data item $d_1 =$ (ice~hockey, equipments) in Table~\ref{multi:tbl:triples}.
	Suppose ice hockey requires 3 equipments . 
	We observe that $s_2$ provides 2 values on $d_1$, meaning that it predicts that
	there are 2 slots for truths; among the provided values one is true. 
	Therefore for this particular data item, $s_2$ has precision $2/2=1$, recall $2/3=0.67$, and accuracy $1/2 = 0.5$. 
	
	Now consider data item $d_2 =$ (snowboarding, equipments), which has 1 truth. 
	As $s_2$ provides 1 correct value, its precision, recall, and accuracy for this item are all 1. 
	
	If $s_2$ provides only these 2 data items, on average, we have
	$P(s_2)=\frac{1+1}{2} = 1, R(s_2)=\frac{0.67+1}{2} = 0.83$, and $A(s_2)=\frac{0.5+1}{2} = 0.75$. 
\end{example}

\subsubsection{Inverse probabilities} 
We are now ready to derive the inverse probabilities $p(\Psi | \observe, v)$ and $p(\Psi | \observe, \perp)$ in Eq.~\eqref{multi:eq:bayes}. 
Assuming that the set of sources are independent, we have 
\begin{equation}
p(\Psi | \observe, v) = \prod_{s \in{{\cal S}}}p(\Psi(s) | \observe, v),
\label{multi:eq:sro_inde}
\end{equation}
and similar for $p(\Psi | \observe, \perp)$.
In the following computations, when conditioning on $(\observe, v)$, we think that $\observe \cup \{v\}$ are the {\em only} set of truths;
similarly, when conditioning on $(\observe, \perp)$, we think $\observe$ is the only set of truths.
This is known as the {\em closed-world} assumption, and according to~\cite{online}, it should give the same results as the {\em open-world} assumption where the truths form a {\em superset} of $\observe \cup \{v\}$.

Let $\bar T$ be the truths of the item $d$, that is, $\bar T={\cal O}$ (when computing $p(\Psi | \observe, \perp)$) or $\bar T = {\cal O} \cup \{v\}$ (when computing $p(\Psi | \observe, v)$). 
Accordingly, we can partition $\Psi(s)$, values provided by $s$ on $d$,
into four categories: {\em consistent values}, {\em inconsistent values}, {\em extra values} and {\em missing values}.
We denote respectively the size of each category as $N_c, N_w, N_e, N_m$,
and the probability that a value falls into each category as $P_c, P_w, P_e, P_m$. 
Then $p(\Psi(s) | \observe, v)$ is given by:
\begin{equation}
\label{multi:eq:spellout}
p(\Psi(s) | \observe, v) = P_c^{N_c} \cdot P_w^{N_w} \cdot P_e^{N_e} \cdot P_m^{N_m}.
\end{equation}

When deriving $p(\Psi(s) | \observe, {\perp})$, 
the only difference is that we will award a source $s$ if it does not provide any extra value ; otherwise, we re-use Eq.~\eqref{multi:eq:spellout}.
The probability of not providing extra values is  $P_{\neg e} = 1 - Q(s)$, and recall that $\observe$ is the (estimated) set of truths for the data item. Thus we have:
\begin{equation} 
\label{multi:eq:spell_perp}
p(\Psi(s) | \observe, {\perp}) = \begin{cases}
P_c^{N_c} \cdot P_w^{N_w} \cdot P_e^{N_e} \cdot P_m^{N_m} & |\Psi(s)| > |\observe|; \\
P_c^{N_c} \cdot P_w^{N_w} \cdot P_e^{N_e} \cdot P_m^{N_m} \cdot P_{\neg e} & |\Psi(s)| \leq |\observe|.
\end{cases}
\end{equation}

We next define each category and describe how we compute their sizes and probabilities.
Following~\cite{accu}, we assume that there are $n$ false values in the domain of $d$ and they are uniformly distributed (note that the false values may not appear in $\cal V$).
\begin{itemize}
	\item {\em Consistent value:} A consistent value is a value in $\bar T \cap \Psi(s)$; thus, $N_c=|\bar T \cap \Psi(s)|$.
	To provide a consistent value, $s$ needs to correctly predict that there exists a slot for truth,
	and fills the slot with a true value, so $P_c=R(s)\cdot A(s)$.
	
	\item {\em Inconsistent value:} An inconsistent value is a value that is provided for a truth slot, but differs from any true value.
	At most we have $|\bar T|$ values provided for truth slots;
	except the consistent values, others are inconsistent.
	Thus $N_w=\min(|\bar T|, |\Psi(s)|) - N_c$.
	When $s$ provides an inconsistent value, it correctly predicts the existence of a truth slot, 
	but fills in a {\em particular} false value, so $P_w=R(s){\cdot} {1-A(s) \over n}$. 
	
	\item {\em Extra value:} If $s$ provides more than $|\bar T|$ values, the rest of the values are extra values;
	thus, $N_e = \max(|\Psi(s)|-|\bar T|, 0)$. When $s$ provides an extra value, it incorrectly predicts
	a non-existing slot, and fills in a {\em particular} (false) value, so 
	$P_e={Q(s) \over n}$.
	
	\item {\em Missing value:} Alternatively when $\Psi(s)$ contains fewer values than $\bar T$, $s$ misses some truth slots (\ie, $s$ thinks they do not exist). We have $N_m = \max(|\bar T|-|\Psi(s)|, 0)$ and $P_m = 1-R(s)$. 
	
\end{itemize}

\begin{example}
	\label{exa:inverse_pr}
	Consider the data item $d_1$ and we now compute 
	$p(\Psi(s_2)|o_1, o_2)$, the probability of observing the values in $\Psi(s_2)$ if $o_2$ is the next truth after $o_1$ has been selected.
	We have $\observe = o_1$, $\Psi(s_2) = \{o_2, o_3\}$ and $\bar{T} = \{o_1, o_2\}$.
	So $\Psi(s_2)$ contains one consistent value, and one inconsistent value; there is no extra value or missing value. In other words, we have $N_c=N_w=1$ and $N_e=N_m=0$.
	
	Supposing $n=10$ and $A(s_2)=0.6, R(s_2)=0.9, Q(s_2)=0.1$, we have 
	$P_c=0.9\cdot 0.6=0.54$, $P_w = 0.9 \cdot \frac{1-0.6}{10} = 0.036$, $P_e = \frac{0.1}{10} = 0.01$, $P_m=0.1$.
	
	Then according to Eq.~\eqref{multi:eq:spellout} we compute:\\
	$p(\Psi(s_2)|o_1, o_2) = 0.54^1 \cdot 0.036^1 \cdot 0.01^0 \cdot 0.1^0 = 0.019$.
	
	We repeat the above process for the other sources and obtain: \\
	$p(\Psi(s_1)|o_1, o_2) = 0.292$, $p(\Psi(s_3)|o_1, o_2) = 0.019$.
	
	With the source-independence assumption, we have: \\
	$p(\Psi|o_1, o_2) = 0.019\cdot 0.292 \cdot 0.019 = 1.05 \times 10^{-4}$.
	
\end{example}

\subsubsection{A priori probabilities}
We then compute the probabilities $p({\perp}|{\cal O})$ and $p(v|{\cal O})$ in Eq.~\eqref{multi:eq:bayes}.
Intuitively, the chance of $\perp$ increases when more truths are found.
Let $\beta_i$ be the {\em a priori} probability of $\perp$ when we are looking for the $i$-th truth (\ie, $|\observe| = i-1$).
So there are $ |{\cal V}|-i+1$ unselected values in $\cal V$; 
assuming they have the same {\em a priori} probability, the {\em a priori} probability $p(v|{\cal O})$ of each value $v$ would be:
\begin{equation}
p(v|{\cal O}) = \frac{1-\beta_i}{|{\cal V}|-i+1}. 
\label{multi:eq:prior}
\end{equation}

We can derive $\beta_i$ from the distribution
of the number of truths for a data item. For example, among people who have children, if $30\%$ of them have 1 child, $40\%$ have 2 children, and so on,
then $\beta_2=0.3$
(with probability $30\%$ there is not a second truth), and $\beta_3=0.7$ (with probability $30\%$+$40\%$
there is not a third truth). 

\subsubsection{Summary}
By putting the derived inverse probabilities and {\em a priori} probabilities into Eq.~\eqref{multi:eq:bayes}, we are able to obtain $p(v|\observe,\Psi)$, and this completes the computation of Eq.~\eqref{multi:eq:valuepr}.
As the following proposition shows, {\sc Hybrid} computes higher probabilities for values provided
by more accurate sources; it finds more truths when high-precision sources provide more values;
and it finds fewer truths when high-recall sources provide fewer values. These all conform to our intuition.

\begin{proposition}
	\label{multi:prop:feature}
	Consider a value $v$ and a source $s \in{{\cal S}}$ where $v \in \Psi(s)$; we fix all sources in ${{\cal S}}$ except $s$.
	\begin{itemize}
		\item If $A(s) > \frac{1}{u+1}$, $p(v|\Psi)$ increases when $A(s)$ increases.
		
		\item If $Q(s) < \frac{R(s) - R(s)A(s)}{1-R(s)A(s)}$, $p({\perp}|\Psi)$ decreases when $s$ provides more values.
		
		\item If $R(s) > \frac{Q(s)}{1-A(s)+A(s)Q(s)}$, $p({\perp}|\Psi)$ increases when $s$ provides fewer values. \rbox
	\end{itemize}
\end{proposition}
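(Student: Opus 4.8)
The plan is to read all three claims as monotonicity statements in a single scalar parameter — $A(s)$ for (i), the cardinality $|\Psi(s)|$ for (ii) and (iii) — holding every other source and every other quantity fixed, and to push the monotonicity down through the decomposition of Eq.~\eqref{multi:eq:valuepr}. By the chain rule \eqref{multi:eq:chainRule}, every ingredient of $p(v\mid\Psi)$ (and of the stopping probability) is a product/quotient of the per-world conditionals $p(\cdot\mid\observe,\Psi)$ from Eq.~\eqref{multi:eq:bayes}, so the crux is to understand how one factor $p(v'\mid\observe,\Psi)$ or $p({\perp}\mid\observe,\Psi)$ reacts to the parameter. For that I would use source-independence \eqref{multi:eq:sro_inde} to isolate the contribution $p(\Psi(s)\mid\observe,\cdot)$ of the source under consideration and expand it with \eqref{multi:eq:spellout}--\eqref{multi:eq:spell_perp}, tracking which of $P_c=R(s)A(s)$, $P_w=R(s)\tfrac{1-A(s)}{n}$, $P_e=Q(s)/n$, $P_m=1-R(s)$, $P_{\neg e}=1-Q(s)$ actually move, and how the exponents $N_c,N_w,N_e,N_m$ change.

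For (i), only $p(\Psi(s)\mid\observe,\cdot)$ depends on $A(s)$, through the increasing factor $P_c$ and the decreasing factor $P_w$; the driving observation is that $N_c=|\bar T\cap\Psi(s)|$, so because $v\in\Psi(s)$ the scenario ``$v$ is the next truth'' ($\bar T=\observe\cup\{v\}$) carries one extra power of $P_c$ relative to the scenarios ``$v'$ is the next truth'' with $v'\notin\Psi(s)$ and relative to ``$\perp$''. Writing $p(v\mid\observe,\Psi)=N/(N+M)$ and dividing numerator and denominator by $p(\Psi(s)\mid\observe,v)$, the quantity $M/N$ becomes a nonnegative combination of the constant $1$, of $P_w/P_c=\tfrac{1-A(s)}{nA(s)}$, and of $P_e/P_c$ (or, in the other branch of \eqref{multi:eq:spell_perp}, $P_w(1-Q(s))/(P_cP_m)$) — each non-increasing in $A(s)$ — so every per-world conditional $p(v\mid\observe,\Psi)$ is non-decreasing in $A(s)$. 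The threshold $A(s)>\frac1{u+1}$ is exactly the value of $A(s)$ at which $P_c>P_w$ (i.e., $u=n$, the domain count of wrong values), and I would use it to control the remaining factors of Eq.~\eqref{multi:eq:valuepr}: each path weight $p(\observe\mid\Psi)$ is a product of conditionals $p(v_j\mid\observe_{j-1},\Psi)$, and the ones selecting a value \emph{not} provided by $s$ can move against $A(s)$; under $P_c>P_w$ one pairs up possible worlds (or couples the two selection processes at accuracies $A(s)<A'(s)$) so that the increase on ``$s$-supported'' paths dominates, yielding the claim for the aggregate $p(v\mid\Psi)$.

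For (ii) and (iii) the parameter is the size of $\Psi(s)$ — compare $\Psi(s)$ with $\Psi(s)\cup\{w\}$, resp.\ $\Psi(s)\setminus\{w\}$ — and the object is the stopping probability $p({\perp}\mid\observe,\Psi)$ of Eq.~\eqref{multi:eq:bayes}. Adding a value to $\Psi(s)$ shifts the exponents in \eqref{multi:eq:spellout}: a not-yet-filled slot that was a \emph{missing} value (weight $P_m$) becomes either an \emph{inconsistent} value (weight $P_w$) or, once $|\Psi(s)|$ exceeds $|\bar T|$, an \emph{extra} value (weight $P_e$); and for the $\perp$-branch it can additionally cost the source its $P_{\neg e}=1-Q(s)$ award in \eqref{multi:eq:spell_perp} the moment $|\Psi(s)|$ passes $|\observe|$. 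Tracking these changes term by term in Eq.~\eqref{multi:eq:bayes} and comparing the growth of the $\perp$-term against that of the $v'$-terms, the sign of the change in $p({\perp}\mid\observe,\Psi)$ reduces to a likelihood ratio versus $1$; routine algebra collapses this to the single inequality $R(s)\bigl(1-A(s)\bigr)>Q(s)\bigl(1-R(s)A(s)\bigr)$, which is exactly the hypothesis of (ii) and — after clearing denominators the other way — also of (iii), so the two parts are dual and share one underlying condition. I would organize that argument by the two cases $|\Psi(s)|\le|\observe|$ and $|\Psi(s)|>|\observe|$ of \eqref{multi:eq:spell_perp} and then combine them.

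The main obstacle is the aggregation step in (i): in contrast to the clean per-world monotonicity, the path weights $p(\observe\mid\Psi)$ vary with $A(s)$ through conditionals that need not move in the same direction, so $A(s)>\frac1{u+1}$ has to be invoked globally (via a world-pairing or coupling) rather than term-by-term, and checking that coupling is compatible with the ``fix all sources except $s$'' convention is the delicate part. For (ii) and (iii) the difficulty is instead bookkeeping: verifying the case split induced by the piecewise definition \eqref{multi:eq:spell_perp} is exhaustive and that the boundary case — $|\Psi(s)|$ crossing $|\observe|$, where the $P_{\neg e}$ award appears or disappears — is precisely where the stated inequality becomes tight.
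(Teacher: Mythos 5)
The paper states Proposition~\ref{multi:prop:feature} without any proof (the appendix proves only Theorem~\ref{multi:the:bound}), so there is no in-paper argument to compare against; judging your proposal on its own merits, parts (ii)--(iii) are essentially right but part (i) has a real gap. For (ii)--(iii), your bookkeeping through Eq.~\eqref{multi:eq:spellout}--\eqref{multi:eq:spell_perp} is the correct mechanism, and it is worth noting that the simplified vote counts of Section~\ref{multi:sec:greedy} make it immediate: the only $s$-dependent factor of $L_{|\observe|+1}(\perp)$ is $\frac{Q(s)}{R(s)(1-A(s))}$ when $|\Psi(s)|>|\observe|$ versus $\frac{1-Q(s)}{1-R(s)}$ when $|\Psi(s)|\le|\observe|$, so providing more (fewer) values only changes $s$'s contribution to the $\perp$ vote at the steps where $|\Psi(s)|$ crosses $|\observe|$, and clearing denominators shows that the stated hypotheses of both (ii) and (iii) are exactly the single inequality $\frac{Q(s)}{R(s)(1-A(s))} < \frac{1-Q(s)}{1-R(s)}$, i.e.\ $Q(s)\bigl(1-R(s)A(s)\bigr) < R(s)\bigl(1-A(s)\bigr)$ --- the same inequality you reach via the $P_m\!\to\!P_w/P_e$ shift and the $P_{\neg e}$ award, and you correctly observe that (ii) and (iii) are dual. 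A fully rigorous aggregate statement would still have to track how the added value's own vote $L(w)$ and the path weights $p(\observe|\Psi)$ move, but the per-world likelihood-ratio comparison you identify is the heart of the claim.

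The gap is in (i). Your per-world computation --- dividing numerator and denominator of Eq.~\eqref{multi:eq:bayes} by $p(\Psi(s)|\observe,v)$ so the competing terms scale like $1$, $P_w/P_c$, or $P_e/P_c$, each non-increasing in $A(s)$ --- is correct, but it establishes monotonicity of $p(v|\observe,\Psi)$ in \emph{every} world without using the hypothesis at all. Hence the entire force of $A(s)>\frac{1}{u+1}$ (with $u$ read as $n$, equivalently $P_c>P_w$, equivalently $\frac{nA(s)}{1-A(s)}>1$) must be spent in the aggregation $p(v|\Psi)=\sum_{\observe}p(v|\observe,\Psi)\,p(\observe|\Psi)$, where worlds whose prefixes avoid $\Psi(s)$ lose weight and worlds whose prefixes contain other $s$-provided values gain weight, and the conditional of $v$ differs across these worlds. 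For exactly this step you offer only the phrase ``pair up possible worlds / couple the two selection processes,'' with no construction, and you yourself flag it as the delicate part; as written this is a named obstacle, not an argument. To close it you would have to exhibit the pairing --- e.g.\ show that when $A(s)$ increases, the probability mass transferred between worlds flows toward worlds on which $p(v|\observe,\Psi)$ is at least as large, and identify precisely where $P_c>P_w$ (equivalently, $s$'s vote factor exceeding $1$) is needed for that comparison --- or replace it with a direct computation on the simplified form \eqref{multi:eq:value_cond}. Until then, the only nontrivial content of claim (i), namely the role of its hypothesis, remains unproven in your write-up.
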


\begin{example}
	Continuing with Example~\ref{exa:inverse_pr}, we proceed to compute $p(o_2|o_1, \Psi)$ using Eq.~\eqref{multi:eq:bayes}.
	This requires the inverse probability $p(\Psi|o_1,v)$ and the {\em a priori} probability $p(v|o_1)$ for every remaining value in ${\cal V}\setminus \observe = \{o_2, o_3, o_4\}$ as well as $\perp$.
	
	We have obtained the inverse probability $p(\Psi|o_1,o_2)$ in Example~\ref{exa:inverse_pr}; we now repeat the process on $o_3$, $o_4$ and $\perp$ to compute:\\
	\indent $p(\Psi|o_1, o_3) = p(\Psi|o_1, o_4) = 6.8\times 10^{-6}$; \\
	\indent $p(\Psi|o_1, \perp) = 1.05\times 10^{-8}$.\\
	Then assuming $\beta_1 = p({\perp}|o_1) =0.3$, from Eq.~\eqref{multi:eq:prior} we have \\
	\indent $p(o_2|o_1) = p(o_3|o_1) = p(o_4|o_1) = \frac{1-\beta_1}{|{\cal V}|-1+1} = 0.175$.\\
	We can thus obtain $p(o_2|o_1, \Psi)$ via Eq.~\eqref{multi:eq:bayes}:\\
	$p(o_2|o_1, \Psi) =\frac{p(\Psi|o_1,o_2) p(o_2|o_1)}{\sum_{v \in \{o_2, o_3, o_4\}} p(\Psi|o_1,v) p(v|o_1) + p(\Psi|o_1,\perp) p({\perp}|o_1)} \\= 0.88$.
	
	Table~\ref{multi:tbl:prob_example} shows the probabilities obtained by enumerating all possible worlds $\observe$ for each value $v$.
	We can see that {\sc Hybrid} gives very high probabilities (0.92) for the two true values ({\em helmet} and {\em stick}) and meanwhile very low probabilities for the false ones. 
	
\end{example}

\subsection{Evaluating Source Quality}

The previous subsection explains how to compute value probabilities based on the quality of sources.
We do not always have such prior knowledge on source qualities, and in this case we start by assuming each source has the same quality, and then iteratively compute value probabilities and source qualities until convergence.
This subsection describes how to update source quality based on the estimated truths of a set of data items.

For each source $s$, we compute $P(s), R(s)$ and $A(s)$ as defined, except that we adopt the probabilistic
decisions made on the truthfulness of values.
We emphasize again that the computation of precision and recall do not consider the {\em truthfulness}
of the values, but only the {\em cardinality} of the provided values (\ie, how many truth slots 
$s$ thinks there are).
\begin{itemize}
	\item The precision of $s$ is the average of its precision on each data item $s$ provides values for.
	Let $\Psi_d(s)$ be the set of values provided by $s$ on $d$ and ${\cal V}_d$ be the domain of $d$.
	Then $\sum_{v {\in} {\cal V}_d} p(v)$ is the (probabilistic) number of truths for $d$, and we have
	\begin{equation} 
	\label{multi:eq:precision}
	P(s) = \avg_{d} ~\min ( \frac{\sum_{v {\in} {\cal V}_d} p(v|\Psi)}{|\Psi_d(s)|}, 1 ).
	\end{equation}
	
	\item Similarly, the recall of $s$ is the average of its recall on each data item.
	\begin{equation} 
	\label{multi:eq:recall}
	R(s) = \avg_{d} ~\min ( \frac{|\Psi_d(s)|}{\sum_{v {\in} {\cal V}_d} p(v|\Psi)}, 1 ).
	\end{equation}

	\item The accuracy of $s$ can be estimated as the average probability of its values,
	divided by its precision, so it accounts for values provided for ``real'' truth slots only.
	\begin{equation} 
	A(s) = {\avg_{d, v \in \Psi_d(s)} p(v|\Psi) \over P(s)}.
	\end{equation}  
\end{itemize}

\section{Approximation for HYBRID}
\label{multi:sec:greedy}

Computing value probabilities by enumerating all possible worlds 
takes exponential time. We conjecture that the value probability computation in
{\sc Hybrid} is \#P-complete; the proof of the conjecture remains an open problem.
This section describes an approximation for probabilities under {\sc Hybrid}.
We start with simplifying the computation of $p(v|{\cal O}, \Psi)$ in Eq.~\eqref{multi:eq:bayes}, and then present
our approximation algorithm.

\subsection{Simplification of $p(v|{\cal O}, \Psi)$} 
We can simplify the computations in Section~\ref{multi:sec:cond_pr} to a much simpler form.
We start from Eq.~\eqref{multi:eq:spellout} and Eq.~\eqref{multi:eq:spell_perp}.

Given a particular source $s$, suppose $N_c = c$, $N_w=w$, $N_e=e$ and $N_m = m$ when computing $p(\Psi(s) | \observe, {\perp})$ with Eq.~\eqref{multi:eq:spell_perp}.
Then we have four cases when deciding the $N_c$, $N_w$, $N_e$ and $N_m$ for $p(\Psi(s) | \observe, v)$ in Eq.~\eqref{multi:eq:spellout}, depending on whether $|\Psi(s)| > |\observe|$ and whether $v \in \Psi(s)$; we illustrate them in Table~\ref{tbl:num_values}.

\begin{table}
	\setlength{\tabcolsep}{4.5pt}
	\centering
	\caption{Numbers used in Eq.~\eqref{multi:eq:spellout} when $N_c = c, N_w=w, N_e=e$ and $N_m = m$ in Eq.~\eqref{multi:eq:spell_perp}.}
	\begin{tabular}{@{}c|l|l|l|l@{}} \hline
		The case that $s$ belongs to & $N_c$ & $N_w$ & $N_e$ & $N_m$ \\ \hline
		case 1: $|\Psi(s)| > |\observe|$ and $v \in \Psi(s)$ & $c+1$ &  $w$& $e-1$ &  $m$\\ 
		case 2: $|\Psi(s)| > |\observe|$ and $v \notin \Psi(s)$ & $c$ &  $w+1$& $e-1$ &  $m$\\ 
		case 3: $|\Psi(s)| \leq |\observe|$ and $v \in \Psi(s)$ & $c+1$ &  $w-1$& $e$ &  $m+1$\\ 
		case 4: $|\Psi(s)| \leq |\observe|$ and $v \notin \Psi(s)$ & $c$ &  $w$& $e$ &  $m+1$\\ \hline
	\end{tabular} 
	\label{tbl:num_values}
\end{table}

Let ${\cal S}_1$, ${\cal S}_2$, ${\cal S}_3$, ${\cal S}_4$ denote the set of sources in ${\cal S}$ that fall into each of the above cases respectively.
We can then write $p(\Psi| \observe, v)$ and $p(\Psi | \observe, {\perp})$ into the following formats:
\begin{align}
p(\Psi| \observe, v) &= \prod_{\bar {\cal S}_1 \cup \bar {\cal S}_3} (P_c)^{c+1} \cdot
\prod_{\bar {\cal S}_2 \cup \bar {\cal S}_4} (P_c)^{c} \cdot \label{multi:eq:v_full}\\
&~~~~~~~\prod_{\bar {\cal S}_1} (P_w)^w \cdot
\prod_{\bar {\cal S}_2} (P_w)^{w+1} \cdot
\prod_{\bar {\cal S}_3} (P_w)^{w-1} \prod_{\bar {\cal S}_4} (P_w)^{w} \cdot \nonumber\\
&~~~~\prod_{\bar {\cal S}_1 \cup \bar {\cal S}_2} (P_e)^{e-1}(P_m)^{m} \cdot
\prod_{\bar {\cal S}_3 \cup \bar {\cal S}_4} (P_e)^{e}(P_m)^{m+1};\nonumber\\
p(\Psi | \observe, {\perp}) &= \prod_{\bar {\cal S}} (P_c)^{c} (P_w)^{w} (P_e)^{e} (P_m)^{m} \cdot
\prod_{\bar {\cal S}_3 \cup \bar {\cal S}_4} P_{\neg e}.
\label{multi:eq:empty_full}
\end{align}

Next let 
\begin{align*}
C &= \prod\limits_{\bar {\cal S}} (P_c)^{c} \cdot
\prod\limits_{\bar {\cal S}_1 \cup \bar {\cal S}_2} (P_w)^{w+1}(P_e)^{e-1}(P_m)^{m} \cdot\\
&~~\prod\limits_{\bar {\cal S}_3 \cup \bar {\cal S}_4} (P_w)^{w}(P_e)^{e}(P_m)^{m+1};
\end{align*}
we can simplify Eq.~\eqref{multi:eq:v_full} and Eq.~\eqref{multi:eq:empty_full} as follows:
\begin{align}
& p(\Psi| \observe, v) = C \cdot \prod_{\bar {\cal S}_1 \cup \bar {\cal S}_3} \frac{P_c}{P_w};\\
& p(\Psi | \observe, {\perp}) = C \cdot 
\prod_{\bar {\cal S}_1 \cup \bar {\cal S}_2} \frac{P_e}{P_w} \cdot 
\prod_{\bar {\cal S}_3 \cup \bar {\cal S}_4} \frac{P_{\neg e}}{P_m}.
\end{align}

Next, we define the {\em vote count} of a value $v$ based on the accuracy of its providers:
\begin{align}
L(v) &= \prod_{\bar {\cal S}_1 \cup \bar {\cal S}_3} \frac{P_c}{P_w} 
= \prod_{\bar {\cal S}_1 \cup \bar {\cal S}_3} \frac{R(s)A(s)}{R(s)\frac{1-A(s)}{n}} \nonumber \\
&= \prod_{S \in{\bar{\cal S}}_v} \frac{nA(s)} {1-A(s)};
\end{align}

The vote count of $\perp$ at the $i$-th step (\ie, $i-1$ truths have been selected) combines the {\em a priori} probability and the votes from all sources: 
\begin{align}
L_{|\observe|+1}(\perp) &= \frac{p({\perp}|\observe)}{p(v|\observe)} \cdot
\prod_{\bar {\cal S}_1 \cup \bar {\cal S}_2} \frac{P_e}{P_w} \cdot
\prod_{\bar {\cal S}_3 \cup \bar {\cal S}_4} \frac{P_{\neg e}}{P_m} \nonumber\\
&= {\beta_i(|{\cal V}|{-}i{+}1) \over 1-\beta_i} \cdot
\prod_{|\Psi(s)| > |\observe|} {Q(s) \over R(s)(1-A(s))} \cdot\\
&~~~~ \prod_{|\Psi(s)| \leq |\observe|} {1-Q(s) \over 1-R(s)} \nonumber.
\end{align}

We  can then transform Eq.~\eqref{multi:eq:bayes} into the following format:
\begin{equation}
\label{multi:eq:value_cond}
p(v|\observe, \Psi)={L(v) \over \sum_{v' \in {\cal V}\setminus \observe} L(v')+L_{|\observe|+1}(\perp)}.
\end{equation}

\subsection{Approximation} 
Our approximation leverages three observations. First, equivalent to computing $p(v|\Psi)$
conditioning on all possible worlds,
we compute $p(v|\Psi)=\sum_{i}p_i(v|\Psi)$, where $p_i(v|\Psi)$ denotes the probability of $v$
being the $i$-th truth, computed by considering possible worlds $\observe$ with $i-1$ values.
Second, although there are multiple possible worlds with size $i-1$, the nature of Bayesian analysis 
determines that one of them would have a much higher probability than the others, so we can use it for approximation. 
Third, once the probability of $\perp$ is above that of the $i$-th truth, it quickly increases to $1$
in the following steps. Therefore if we terminate at the $i$-th step, we would not lose much.
Recall that the confidence of a value $v$ does not change with $i$, but only that
of $\perp$ changes, thus we can easily decide the number of steps we need before termination.

Algorithm~\ref{multi:algo:greedy} gives the details of the approximation.
\begin{itemize}
	\item Without losing generality, let $L(v_1), L(v_2), \dots$ be a sorted list in {\em decreasing} order, that is, $L(v_{i-1}) \geq L(v_i)$ for $\forall i > 1$ (Lines~\ref{begin:order}-\ref{end:order}). 
	Let $k$ be an integer where $L(v_{k-1}) \geq L_{k-1}(\perp)$ and $L(v_k) < L_k(\perp)$;  we thus terminate after $k$ steps, and the first $k-1$ values are considered as truths (Lines~\ref{begin:terminate}-\ref{end:terminate}). 
	
	\item We first initialize $p(v|\Psi)$ to $0$ for each $v \in {\cal V}$. 
	Then at each step $i$, we update $p(v|\Psi)$ by adding the probability $p_i(v|\Psi)$ for $v$ to be the $i$-th truth  (Line~\ref{end:valuePr}).
	To compute $p_i(v|\Psi)$, we consider possible worlds where $v$ is not present yet (their probabilities sum up to $1-p(v|\Psi)$).
	Assuming $v$ has the same conditional probability in all these possible worlds, denoted by $p(v|\Gamma_i,\Psi)$, we have:
	\begin{equation}
	\label{multi:eq:upper_v}
	p_i(v|\Psi) = (1-p(v|\Psi)) \cdot p(v|\Gamma_i, \Psi).
	\end{equation}
	
	\item We obtain $p(v|\Gamma_i, \Psi)$ from the possible world with the largest probability,
	which must have selected the $i{-}1$ values with the highest confidence as truths; 
	that is, $\Gamma_i = v_1v_2\dots v_{i-1}$.
	We thus compute $p(v|\Gamma_i, \Psi)$ by normalizing the subsequence of confidence starting with $L(v_i)$ (Line~\ref{pr}).
	Note that for any possible world $\observe$ with length $i-1$, we have $p(v|\observe, \Psi) \leq p(v|\Gamma_i, \Psi)$.
\end{itemize}

\begin{algorithm} 
	\SetKwInOut{Input}{input}
	\SetKwInOut{Output}{output}
	\caption{Approximation for {\sc Hybrid}}
	\label{multi:algo:greedy}
	\Input{Observations $\Psi$ containing a set $\cal V$ of values provided by a set ${\cal S}$ of sources on data item $d$; prior probability $\beta$}
	\Output{Probability $p(v|\Psi)$ for each $v \in {\cal V}$}
	\BlankLine
	\ForEach{$v \in {\cal V}$}{ \label{begin:order}
		$p(v|\Psi) \leftarrow 0$\; \label{init}
		Compute $L(v)$ using Eq.~\eqref{multi:eq:confv}\;
	}
	Let $L(v_1), L(v_2), \dots$ be a sorted list in decreasing order\; \label{end:order}
	
	\ForEach{$i \in [1, |{\cal V}|]$}{ \label{begin:outer}
		Compute $L_i(\perp)$ using Eq.~\eqref{multi:eq:confempty}\;
		\ForEach{$v \in {\cal V}$}{   \label{begin:valuePr}
			$p(v|\Gamma_i, \Psi) = \min (\frac{L(v)}{\sum_{j=i}^{j=|\cal V|}L(v_j)+ L_i(\perp)},1)$\; \label{pr} 
			$p(v|\Psi) \leftarrow p(v|\Psi) + (1-p(v|\Psi)) \cdot p(v|\Gamma_i, \Psi)$\;   \label{end:valuePr}
		}
		\If {$L_i(\perp) > L(v_i) $} {  \label{begin:terminate}
			break\; \label{end:terminate}
		}
	}
\end{algorithm}

\begin{example}
	Consider again computing $p(o_2|\Psi)$. The sorted list of the value confidences is $\{225, 225, 15, 15\}$, given by $o_1$, $o_2$, $o_3$ and $o_4$;
	the confidences of $\perp$ in different steps are $\{0.1, 0.24, 18033, \\18033\}$.
	We thus terminate after the third step (when $i=3$). 
	
	When $i{=}1$, we compute $p_1(o_2|\Psi)=0.47$.
	
	When $i{=}2$, we compute $p(o_2|\Gamma_2,\Psi)= p(o_2|o_1,\Psi)=0.88$, thus $p_2(o_2|\Psi)=(1-0.47)\times 0.88 = 0.47$.
	
	When $i{=}3$, we compute $p(o_2|\Gamma_3,\Psi)= \frac{225}{15+15+18033} = 0.01$, thus
	$p_3(o_2|\Psi)=(1-0.47-0.47) {\times} 0.01 = 0.0006$. 
	
	The final probability for $o_2$ is $p(o_2| \Psi) = p_1(o_2|\Psi) + p_2(o_2|\Psi) + p_3(o_2|\Psi)= 0.9406$, very close to the probability $0.92$ obtained by {\sc Hybrid}.
\end{example}

The next theorem shows that Algorithm~\ref{multi:algo:greedy} approximates the value probabilities both efficiently and
effectively.
\begin{theorem}
	Let $d$ be a data item and $n$ be the number of values provided for $d$.
	\begin{itemize}
		\item Algorithm~\ref{multi:algo:greedy} estimates the probability of each provided value 
		in time $O(n^2)$.
		\item For each value $v$ on $d$, we have $|p(v) - \hat{p}(v)| < \frac{1}{6}$,
		where $\hat{p}(v)$ is the exact probability computed by {\sc Hybrid}, 
		and $p(v)$ is the probability obtained by Algorithm~\ref{multi:algo:greedy}. \rbox
	\end{itemize}
	\label{multi:the:bound}
\end{theorem}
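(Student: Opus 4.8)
The plan is to show that the only part of Algorithm~\ref{multi:algo:greedy} that is not trivially linear, the nested loop, runs in $O(n^2)$. First I would precompute, in a single backward pass, all suffix sums $\sum_{j=i}^{|{\cal V}|}L(v_j)$, and maintain $L_i(\perp)$ incrementally: from step $i$ to step $i+1$ only the sources with $|\Psi(s)|=i-1$ move between the two products in its definition (and the $\beta_i$ factor changes), so the total maintenance cost over all steps is linear. With these caches the body of the inner loop (Line~\ref{pr}--Line~\ref{end:valuePr}) costs $O(1)$, the outer loop runs at most $|{\cal V}|=n$ times and the inner loop $n$ times, for $O(n^2)$. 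Sorting the $L(v)$'s is $O(n\log n)$ and forming them is linear in the number of observations; both are lower order, so the algorithm is $O(n^2)$.

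\noindent\textbf{Set-up and the one-sided bound.}
Decompose the exact probability as $\hat p(v)=\sum_{i\ge 1}\hat p_i(v)$, where $\hat p_i(v)=\sum_{\observe:\,|\observe|=i-1,\,v\notin\observe}p(\observe\mid\Psi)\,p(v\mid\observe,\Psi)$ collects the possible worlds that place $v$ in the $i$-th slot, and let the output be $p(v)=\sum_{i=1}^{k}p_i(v\mid\Psi)$ with $k$ the termination step. Put $P_i=\sum_{j\le i}p_j(v\mid\Psi)$ and $\hat P_i=\sum_{j\le i}\hat p_j(v)$. The first step is to prove $P_i\ge\hat P_i$ for every $i$ by induction. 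Using the fact stated in Section~\ref{multi:sec:greedy} that $p(v\mid\observe,\Psi)\le p(v\mid\Gamma_i,\Psi)$ for every world $\observe$ of length $i-1$, together with $\sum_{|\observe|=i-1}p(\observe\mid\Psi)\le 1-\hat P_{i-1}$ (``$v$ already chosen in the first $i-1$ slots'' and ``still running with $v$ unchosen'' are disjoint events), one gets $\hat p_i(v)\le(1-\hat P_{i-1})\,p(v\mid\Gamma_i,\Psi)$ even after the $\min$ with $1$; comparing with the update $P_i=P_{i-1}+(1-P_{i-1})\,p(v\mid\Gamma_i,\Psi)$ of Line~\ref{end:valuePr} (cf.\ Eq.~\eqref{multi:eq:upper_v}) and using that $x\mapsto x+(1-x)q$ is non-decreasing finishes the induction from the exact base case $P_1=\hat P_1$. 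Hence $p(v)=P_k\ge\hat P_k$, which gives $\hat p(v)-p(v)\le\sum_{j>k}\hat p_j(v)$ and $p(v)-\hat p(v)\le P_k-\hat P_k$; it therefore suffices to bound the truncated tail and the accumulated overestimate, each below $\tfrac16$.

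\noindent\textbf{Bounding the two errors.}
For the tail I would use the termination rule. For $i\ge k$ the vote count $L_i(\perp)$ is at least $L(v_i)$, hence at least that of every still-available value, and $L_i(\perp)$ is non-decreasing in $i$ (each source entering the ``$|\Psi(s)|\le|\observe|$'' product multiplies it by a factor larger than $1$, and $\beta_i\to 1$); by Eq.~\eqref{multi:eq:value_cond} this drives the conditional probability of $\perp$ to (and beyond) $\tfrac12$, so that the mass of worlds that are still running and do not contain $v$ at least halves at each subsequent step, making $\sum_{j>k}\hat p_j(v)$ a geometric tail below $\tfrac16$. For the overestimate $P_k-\hat P_k\ge 0$ I would track the two over-countings committed at each step $i\le k$ --- replacing the weighted average over length-$(i-1)$ worlds by the single maximiser $\Gamma_i$, and replacing the running mass $\sum_{|\observe|=i-1}p(\observe\mid\Psi)$ by the larger $1-P_{i-1}$. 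The first is small because, by the Bayesian concentration noted in the text, $\Gamma_i$ carries most of the length-$(i-1)$ mass; the second is damped because $1-P_i=(1-P_{i-1})\bigl(1-p(v\mid\Gamma_i,\Psi)\bigr)$, so earlier over-countings shrink the factor applied later. Carrying these through the first $k$ steps bounds $P_k-\hat P_k$ by a geometric estimate of the same order, and combining everything yields $|p(v)-\hat p(v)|<\tfrac16$.

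\noindent\textbf{Main obstacle.}
The running-time claim and the one-sided bound $P_i\ge\hat P_i$ are routine. The real difficulty is in the last paragraph: turning the qualitative statements ``the MAP world carries most of the mass'' and ``$L_i(\perp)$ eventually dominates and keeps growing'' into absolute, parameter-free constants, and handling the coupling between over-counting the early steps --- which is partly self-correcting because it shrinks the multiplier $1-P_{i-1}$ used later --- and truncating at step $k$. It is this worst-case bookkeeping that pins down the explicit value $\tfrac16$; a looser analysis would give a larger constant.
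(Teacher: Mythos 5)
\noindent The running-time half of your proposal is fine and matches the paper's (the paper simply notes that sorting is $O(n\log n)$ and the two nested loops cost $O(n^2)$; your incremental maintenance of the suffix sums and of $L_i(\perp)$ is a harmless refinement). The error-bound half, however, has a genuine gap, and it is exactly the part you flag as the ``main obstacle'': the constant $\tfrac16$ is never derived. Your one-sided induction $P_i\ge\hat P_i$ is sound (it rests on $p(v\mid\observe,\Psi)\le p(v\mid\Gamma_i,\Psi)$ and the update in Eq.~\eqref{multi:eq:upper_v}), but the two quantitative bounds you then need are not established, and one of them is false as stated. The termination rule only guarantees $L_k(\perp)>L(v_k)$, i.e.\ $\perp$ beats a \emph{single} remaining vote count; in Eq.~\eqref{multi:eq:value_cond} the probability of $\perp$ is normalized against the \emph{sum} $\sum_{v'\in{\cal V}\setminus\observe}L(v')$, so with many surviving values of comparable vote counts $p(\perp\mid\observe,\Psi)$ can be far below $\tfrac12$ at step $k$, and the claimed halving of the surviving mass (hence the ``geometric tail below $\tfrac16$'') does not follow. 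Likewise, the monotonicity of $L_i(\perp)$ needs conditions on $\beta_i$ and on $R(s),Q(s)$ that you do not verify, and ``the MAP world $\Gamma_i$ carries most of the mass'' is only a heuristic motivation in Section~\ref{multi:sec:greedy}, not a quantitative lemma you can invoke to bound $P_k-\hat P_k$.

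The paper closes the argument by a completely different, instance-specific route: it isolates the two discrepancies between Algorithm~\ref{multi:algo:greedy} and the full \model{} model --- (i) early termination, which can only under-estimate $p(v)$, and (ii) using the single conditional probability $p(v\mid\Gamma_i,\Psi)$ for all length-$(i{-}1)$ worlds, which can only over-estimate it --- and for each it \emph{constructs the worst-case configuration} (three values with vote counts $L(v_1)=\gamma a$, $L(v_2)=L(v_3)=a$, and $L_2(\perp)=L_3(\perp)$ placed just above, respectively just at, $a$), computes both $\hat p(v_3)$ by enumerating the possible-world tree and $p(v_3)$ from the algorithm in closed form, and obtains an error of $\frac16\cdot\frac{\gamma+1}{\gamma+2}<\frac16$ in each direction. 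So the explicit constant comes out of a closed-form worst-case computation, not out of a general tail/concentration analysis. If you want to pursue your general decomposition, you would have to replace the halving claim by a correct lower bound on $p(\perp\mid\observe,\Psi)$ after termination (which requires controlling the whole sum of surviving vote counts) and turn the concentration-on-$\Gamma_i$ heuristic into an explicit inequality; as written, neither of the two error terms is actually shown to be below $\tfrac16$, so the proof does not close.
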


\begin{proof}
	See Appendix~\ref{append:bound}.
\end{proof}

\section{Experimental Study}
\label{multi:sec:exp}

We now present experimental results to evaluate the proposed approach.
Section~\ref{multi:sec:exp:setting} describes experimental settings.
Then Section~\ref{multi:sec:exp:fusion} gives a comprehensive comparison of various fusion models on a widely used real dataset as well as synthetic data,
showing that {\sc Hybrid} outperforms others in general and is the most robust.

\subsection{Experimental Settings}
\label{multi:sec:exp:setting}

\noindent {\bf Methods to compare.}
We compared the following fusion algorithms.
\begin{itemize}
	\item {\sc Accu}~\cite{accu}, the single-truth model reviewed in Section~\ref{multi:sec:intro}. For each data item, it considers the value with the highest predicted probability as the truth.
	\item {\sc PrecRec}~\cite{precrec}, the multi-truth model reviewed in Section~\ref{multi:sec:intro}. It considers a value correct if its predicted probability is above 0.5.
	\item {\sc LTM}~\cite{LTM}, a multi-truth model using directed graphical model.
	It also considers a value correct if its predicated probability is above 0.5.
	\item {\sc TwoStep}, the baseline method described in Section~\ref{multi:sec:hybrid}. It first 
	decides the number of truths $k$, and then returns the $k$ values with top probabilities according to {\sc Accu}.
	\item {\sc Hybrid}, Algorithm~\ref{multi:algo:greedy} described in Section~\ref{multi:sec:greedy}. 
	It considers the values obtained before the termination step as the truths.
\end{itemize}

\noindent {\bf Implementations.} 
Whenever applicable, we set $n=10, \alpha=0.25$, and consider only ``good'' sources 
(\eg, sources on which the conditions in Proposition~\ref{multi:prop:feature} hold). 
We initialize the source quality metrics as $A=0.8, R=0.8, Q=0.2$, and then iteratively compute value probabilities and source qualities for up to 5 iterations.
We implemented all methods in Java on a MapReduce-based framework.

\smallskip
\noindent
{\bf Metrics.} We report {\em precision} and {\em recall} for each method. {\em Precision} measures among all
observations predicted as correct, the percentage that are true. {\em Recall} measures among all
true observations, the percentage that are predicted as correct. {\em F-measure} is computed as 
${2 \cdot prec \cdot rec \over prec + rec}$. (Note that they are different from {\em precision}
and {\em recall} of individual sources as defined in Section~\ref{multi:sec:cond_pr}).

\subsection{Experiment Results}
\label{multi:sec:exp:fusion}

\subsubsection{Results on Book data}

We first use the Book data from \cite{truthFinder}, which  has been widely used for knowledge-fusion experiments. 
As shown in Table~\ref{tbl:book}, it contains 6,139 book-author triples on 1,263 books from 876 retailers. 
The gold standard consists of authors for 100 randomly sampled books,
where the authors were manually identified from book covers. 
According to the gold standard, 62\% of the provided authors are correct, and 98\% of the true values are provided by some source.
There are 57\% of the books that have multiple authors.

\begin{table}
	\centering
	\fontsize{8.5}{8.5}\selectfont
	\caption{Statics of the Book data.}
	\begin{tabular}{@{}r r r r r r@{}}
		\hline \#entities & \#triples & \#sources& precision & recall & \%multi-truth  \\ 
		\hline 1,263 & 6,139 & 876 & 0.62 & 0.98 & 57\% \\ 
		\hline 
	\end{tabular} 
	\label{tbl:book}
\end{table}

In addition to the five fusion methods we listed before, we also compared with {\sc Accu\_list},
which applies {\sc Accu} but considers the full list of authors as a whole~\cite{accu, truthFinder}.

\begin{table}
	\centering
	\caption{Results on Book data. {\sc Hybrid} obtains the highest recall and F-measure.}
	\begin{tabular}{lrrr}
		\hline
		& Precision & Recall &    F1 \\ \hline
		{\sc Accu} &     \textbf{0.990} &  0.532 & 0.692 \\ 
		{\sc Accu\_list}        &     0.974 &  0.801 & 0.879 \\ \hline
		LTM               &     0.911 &  \textbf{0.973} & 0.941 \\
		{\sc PrecRec}     &     0.965 &  0.931 & 0.947 \\ \hline
		{\sc Hybrid}      &     0.941 &  \textbf{0.973} & \textbf{0.957} \\ \hline
	\end{tabular}
	\label{tbl:result_book}
\end{table}

Table~\ref{tbl:result_book} shows the results, and we can see that \model{} obtains a higher F-measure than existing single-truth and multi-truth models.
By considering both conflicts between values and the possibility of having multiple truths, 
it is able to identify more true values without sacrificing much of precision.
Not surprisingly, {\sc Accu} has the highest precision but the lowest recall as it only finds one author for a book.
{\sc LTM} has a lower precision as it lacks a global view of the values provided for the same data item.
Instead, {\sc PrecRec} has a lower recall but a higher precision:
in this dataset many sources only provide the first author of a book and this explains the low recall;
the high precision is because the sources provide few wrong values.
{\sc TwoStep} separates the decisions of how many truths there are and what they are, 
so may return authors that do not have strong support, leading to a low precision.

\subsubsection{Results on Synthetic Data}
\label{multi:sec:exp:syn}

To better understand the performance of different approaches in various situations, 
we compare them on synthetic data where we vary the number of truths and the quality of sources.

We generated 10 data sources providing values on 100 data items,
where wrong values were randomly selected from a domain of 100 values.
We varied the following parameters when generating the data.
\begin{itemize} 
	\item \textit{Number of truths} for each data item: ranges from 1 to 10, and by default follows a Gaussian distribution 
	with mean=6 and std=1.
	\item \textit{Source accuracy}: ranges from 0.2 to 1, and 0.7 by default.
	\item \textit{Source recall}: ranges from 0.2 to 1, and 0.7 by default.
	\item \textit{Extra ratio}: equals to $\frac{N_e}{N_c+N_w}$ (see Eq.~\eqref{multi:eq:spellout});
	ranges from 0.2 to 1, and 0.2 by default.
\end{itemize}
All experiments were repeated 100 times and we report the average.

\begin{figure*}
	\centering
	\includegraphics[width=0.267\textwidth]{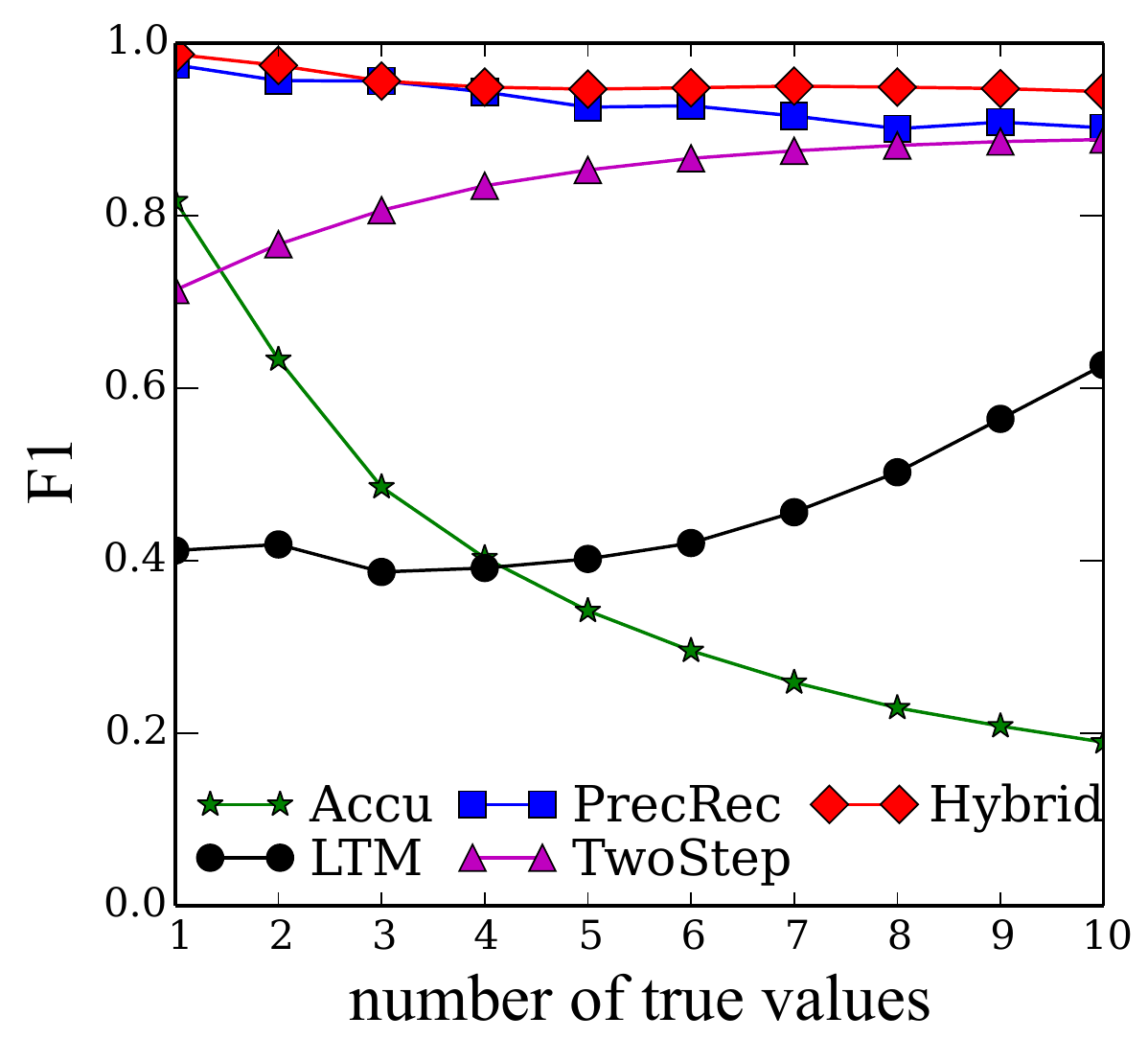} \quad\quad
	\includegraphics[width=0.267\textwidth]{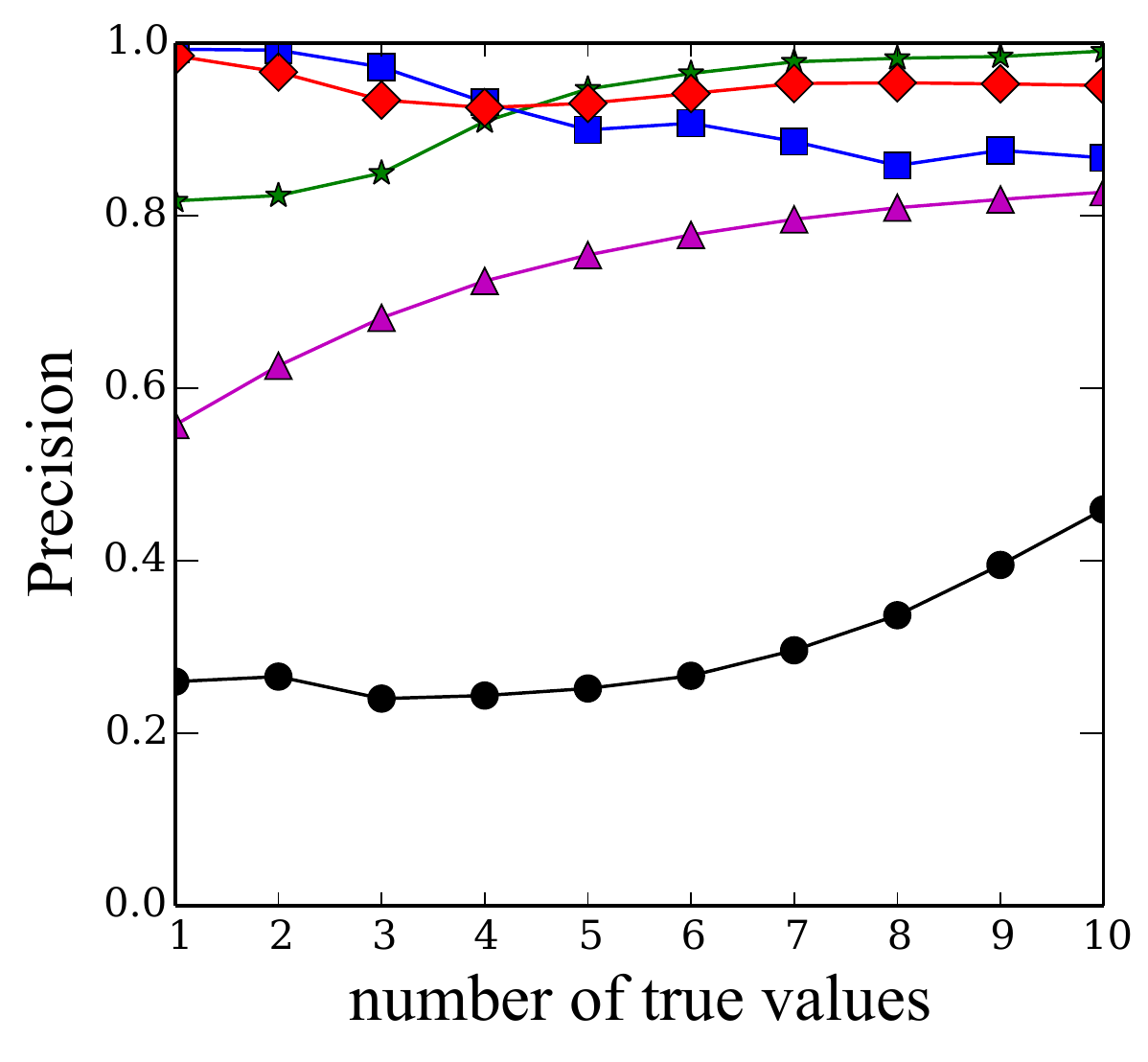} \quad\quad
	\includegraphics[width=0.267\textwidth]{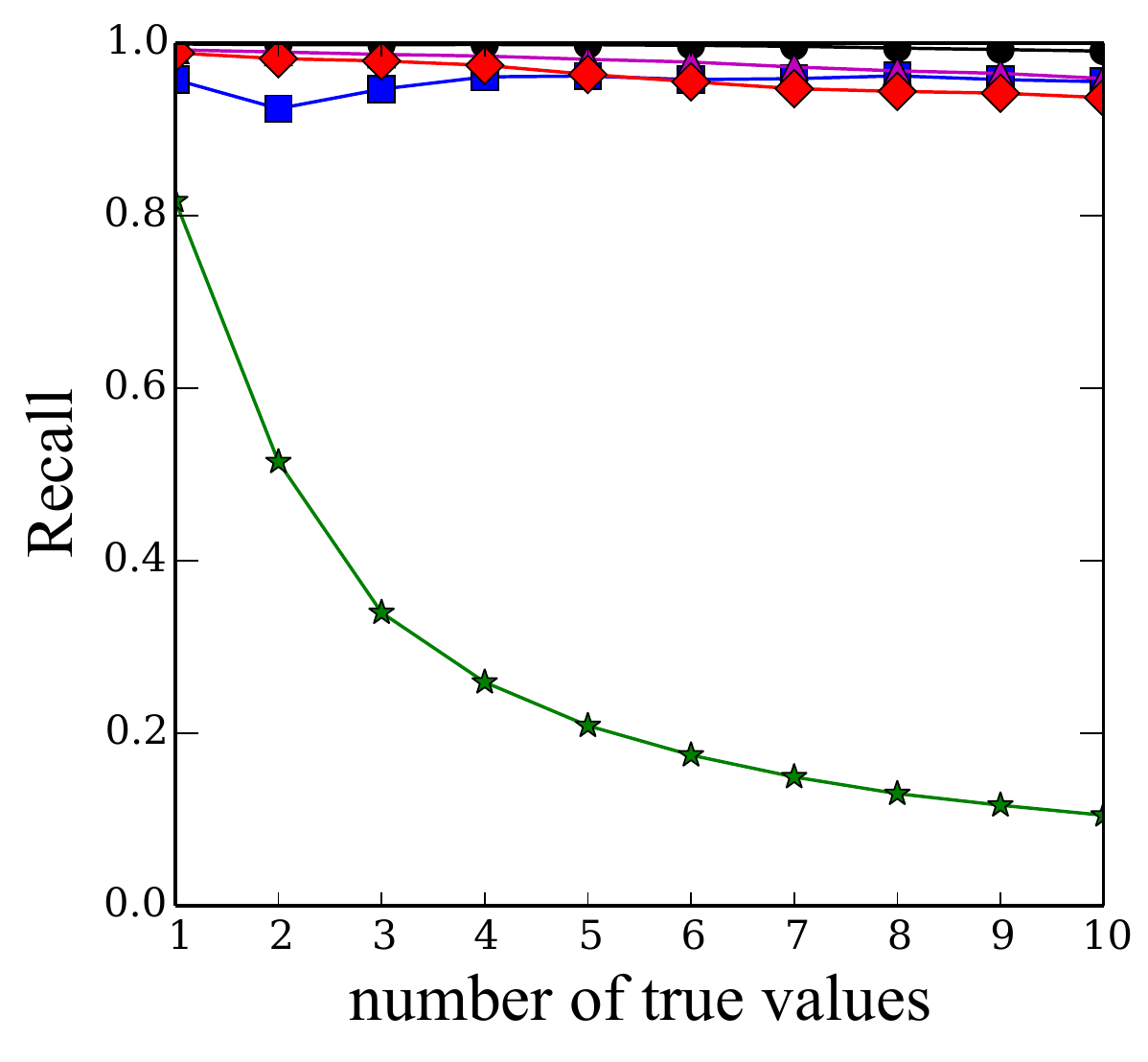}
	\caption{Varying the number of truths on synthetic data. {\sc Hybrid} improves over other models when the number of truths is large.}
	\label{multi:fig:syn_vary_truthNum}
\end{figure*}

\begin{figure*}
		\centering
		\includegraphics[width=.267\textwidth]{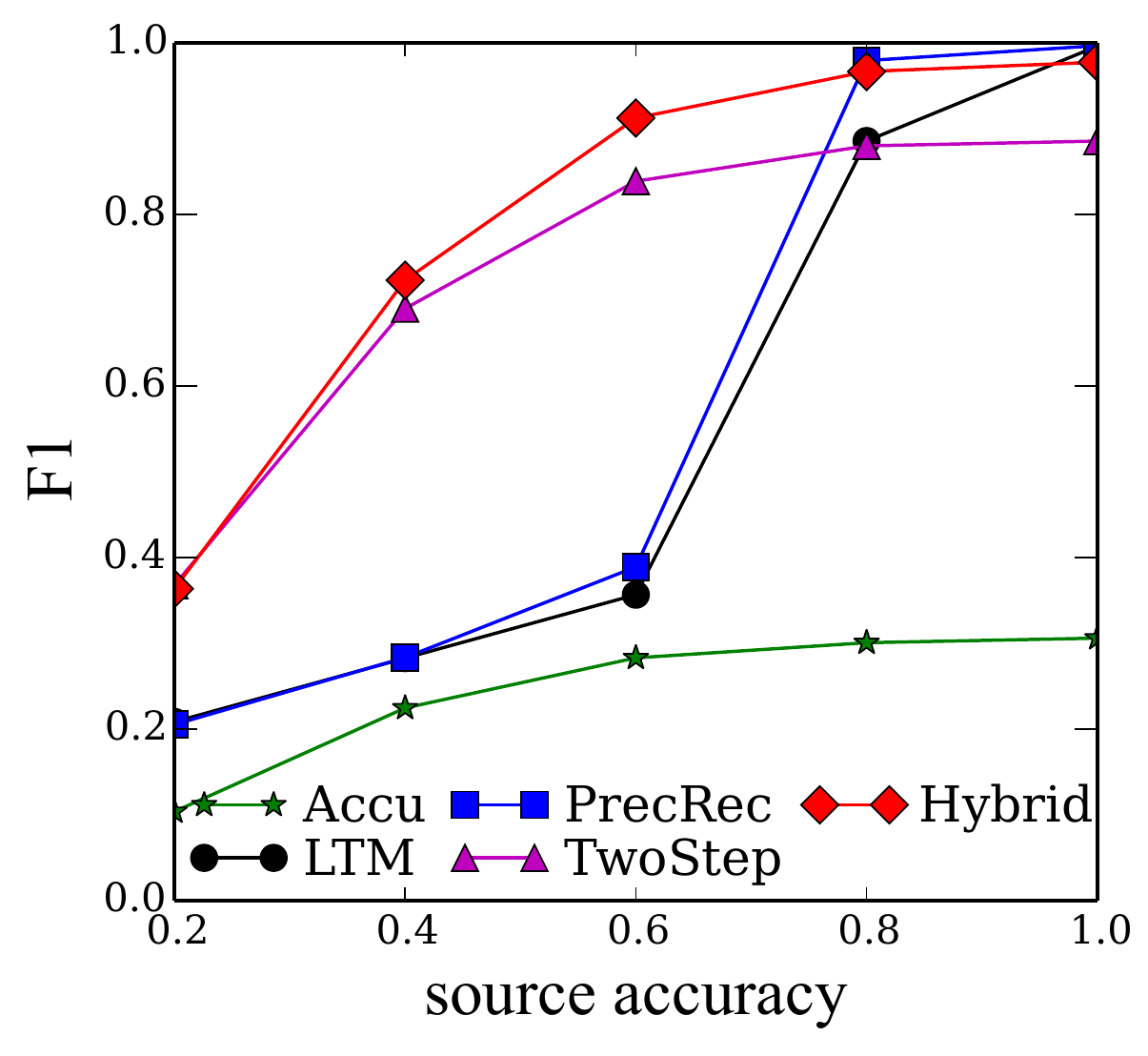} \quad\quad
		\includegraphics[width=.267\textwidth]{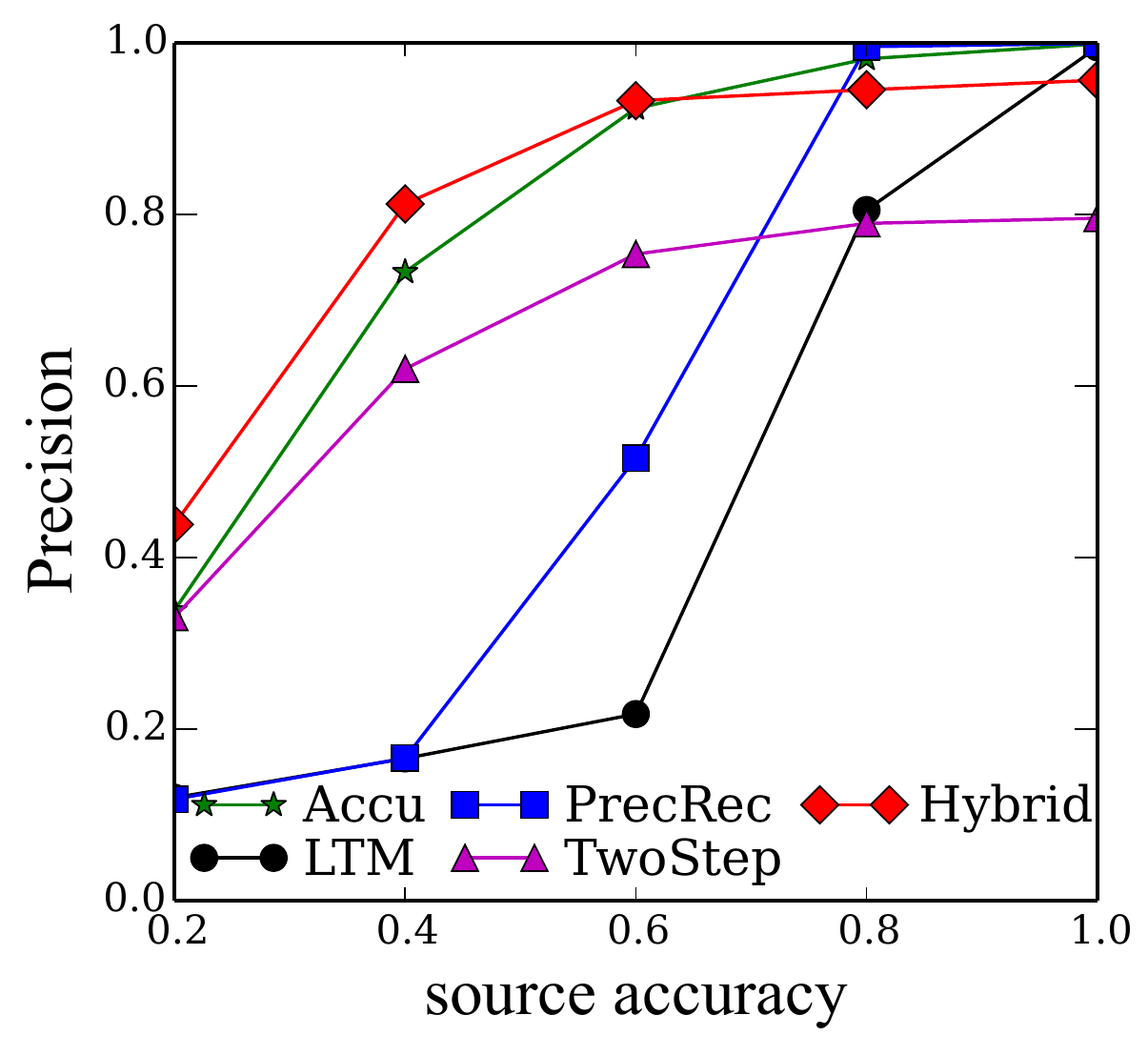} \quad\quad
		\includegraphics[width=.267\textwidth]{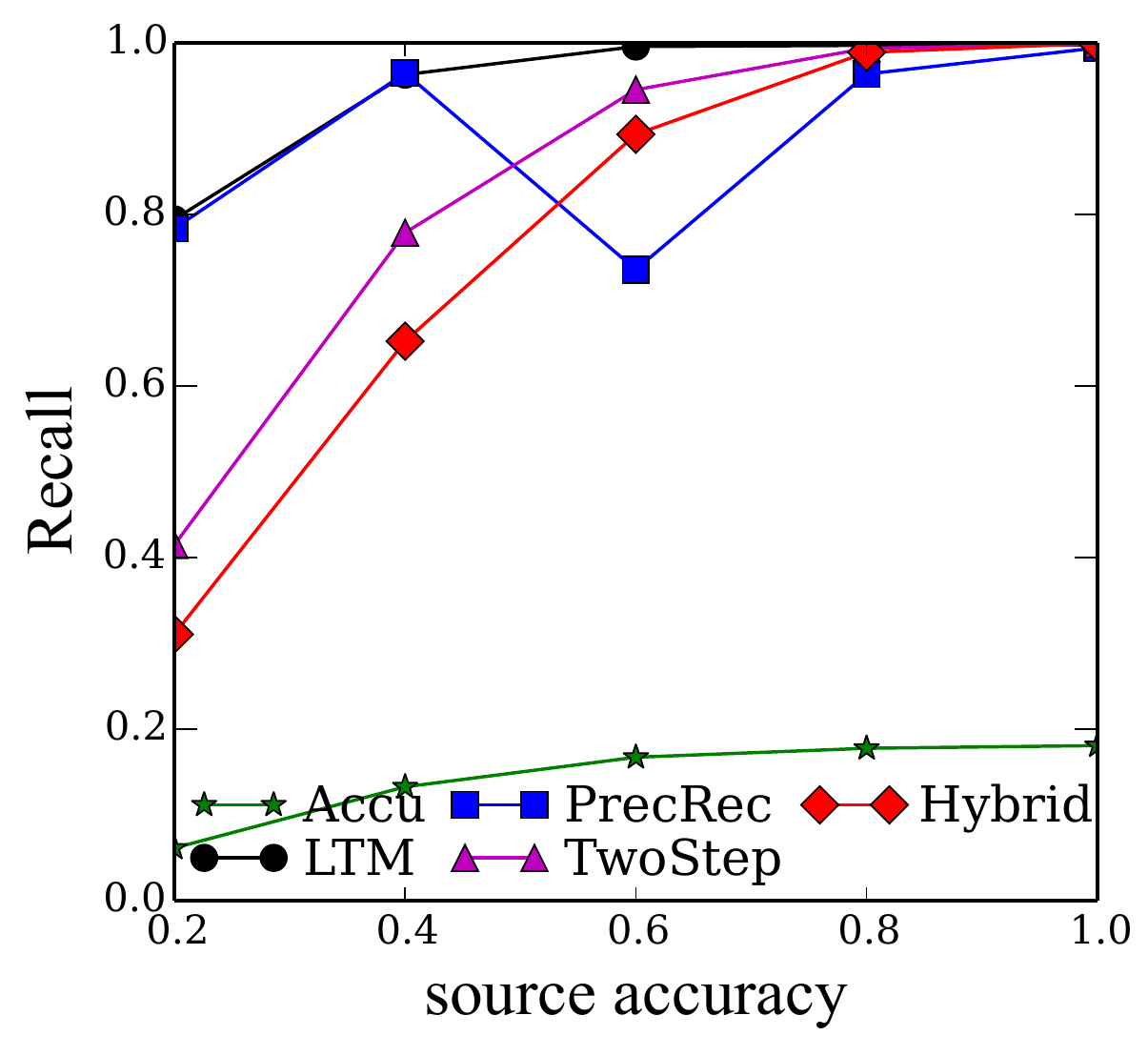}
		\caption{Varying source accuracy. {\sc Hybrid} obtains a significantly higher precision when source accuracy is low.}
		\label{multi:fig:syn_vary_accuracy}
\end{figure*}

\begin{figure*}
		\centering
		\includegraphics[width=.267\textwidth]{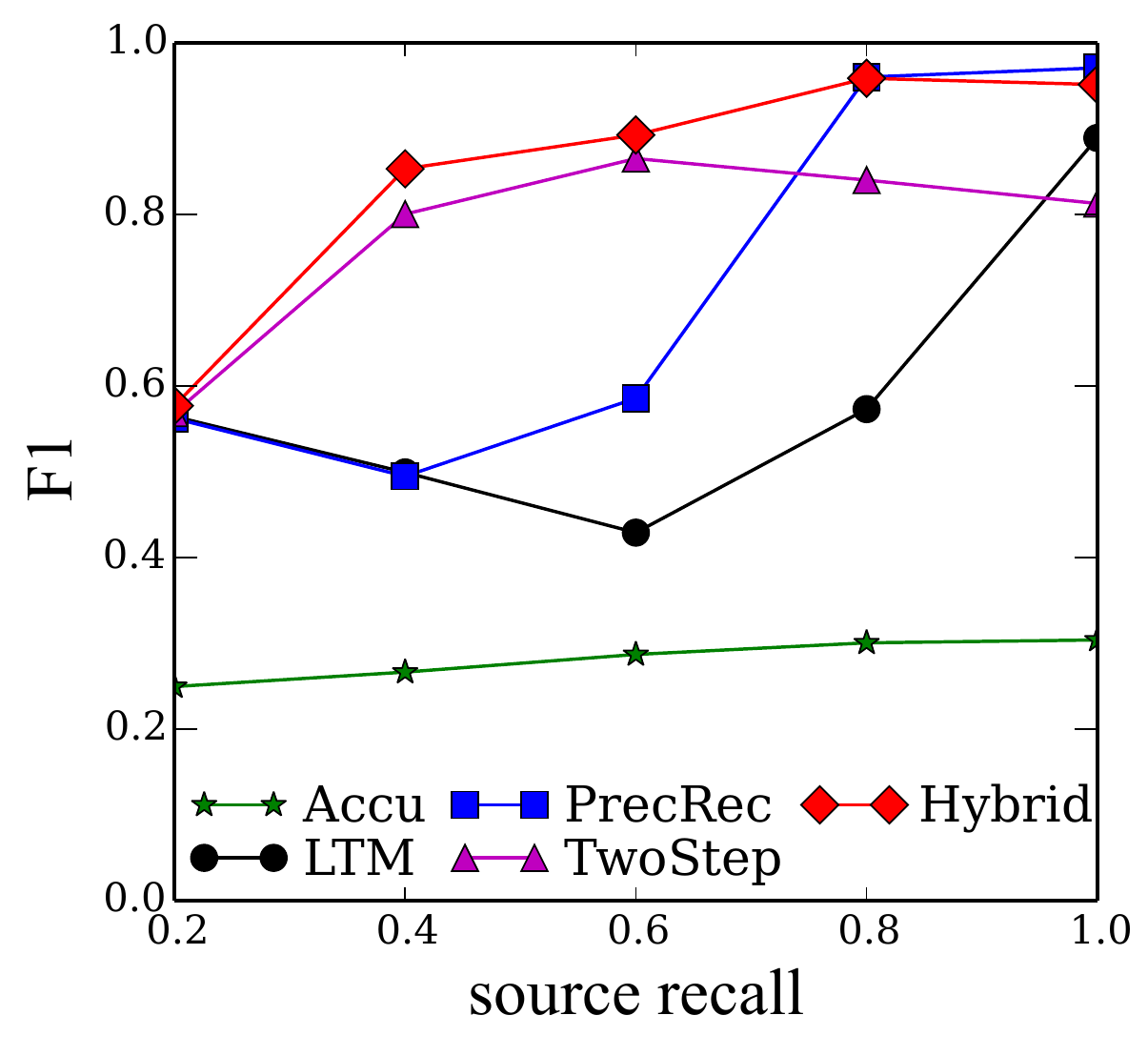} \quad\quad
		\includegraphics[width=.267\textwidth]{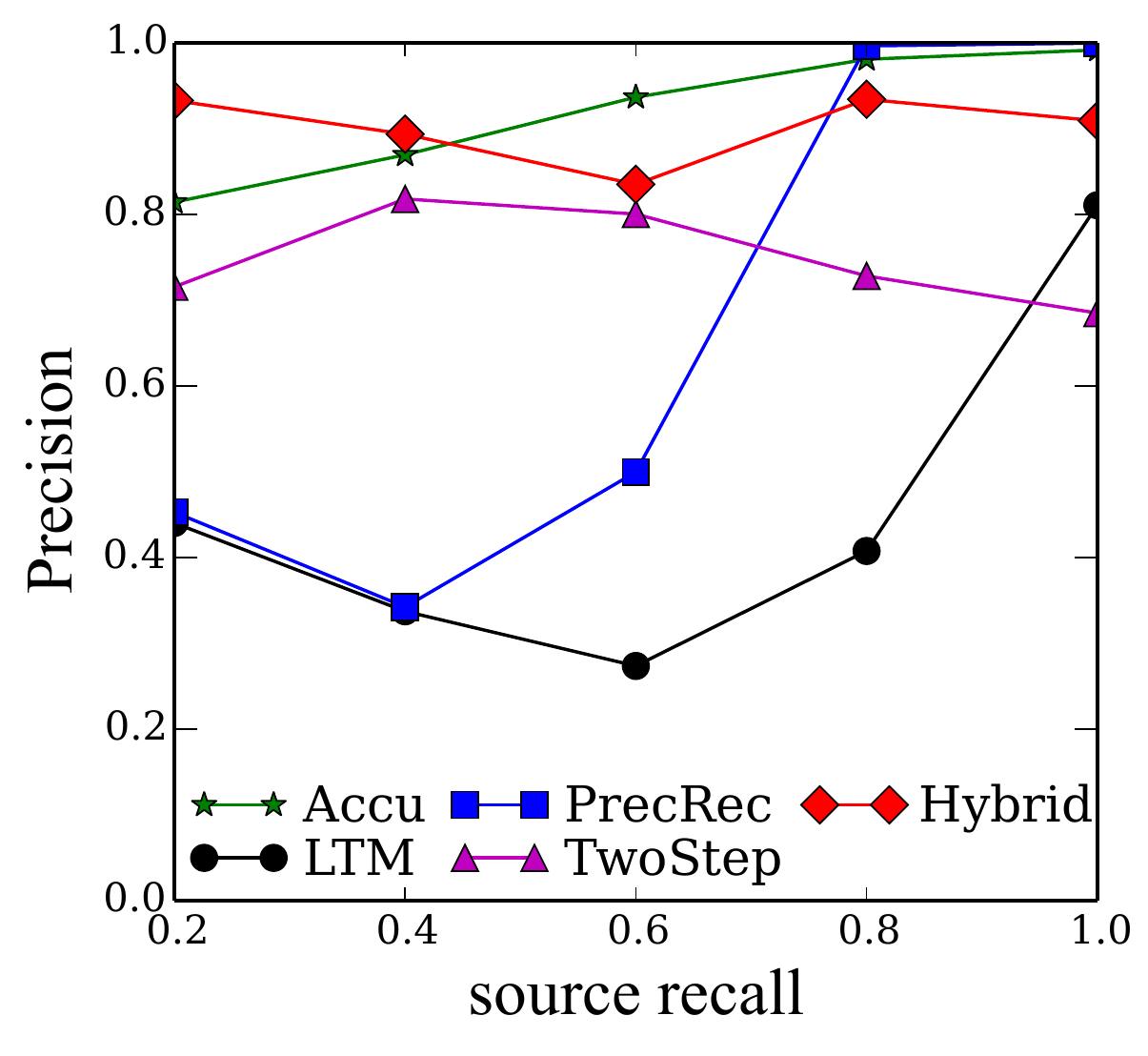} \quad\quad
		\includegraphics[width=.267\textwidth]{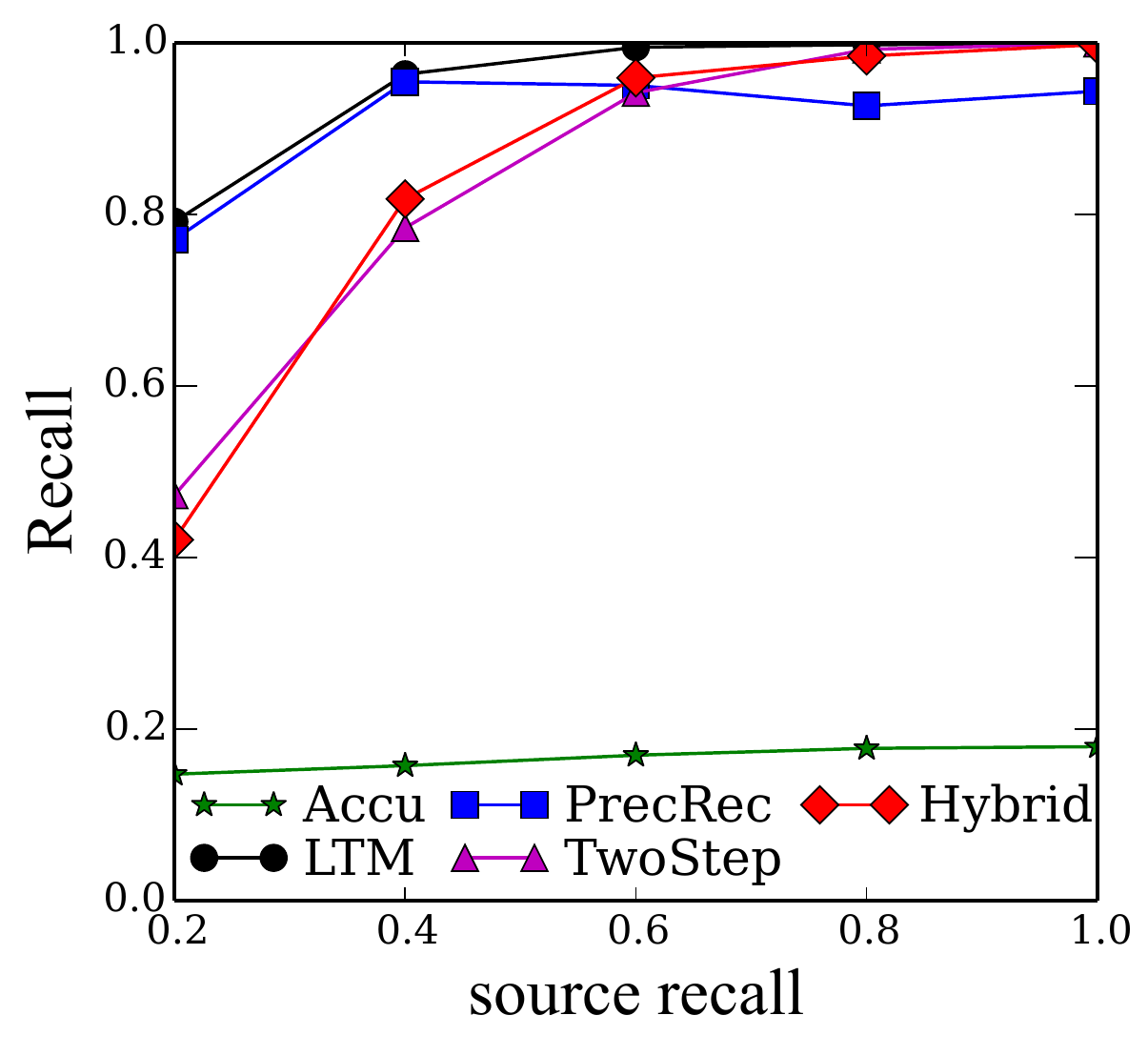}
		\caption{Varying source recall. {\sc Hybrid} gives the highest precision and F1 when the sources have medium recall.}
		\label{multi:fig:syn_vary_capture}
\end{figure*}

\begin{figure*}
		\centering
		\includegraphics[width=.267\textwidth]{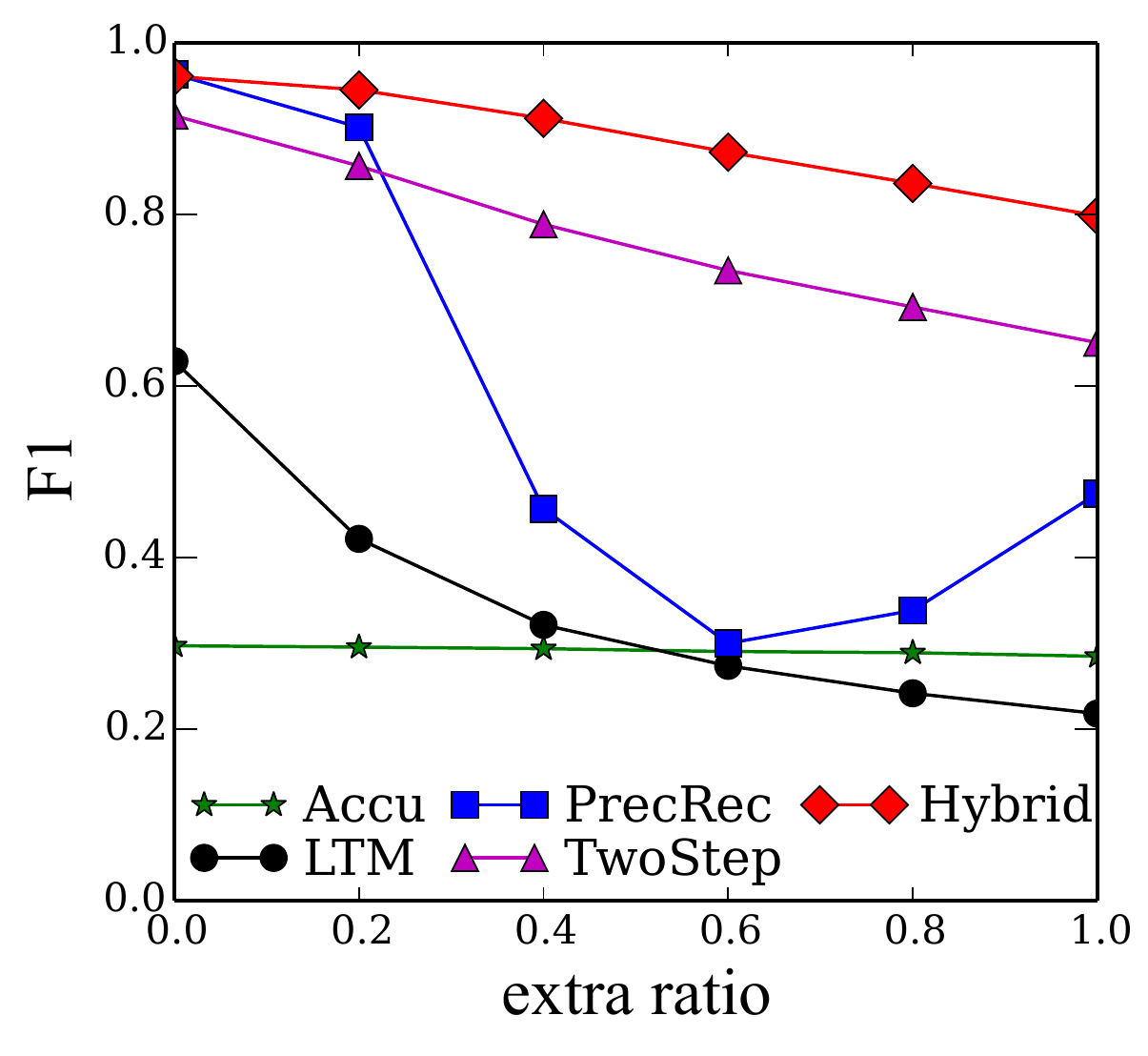} \quad\quad
		\includegraphics[width=.267\textwidth]{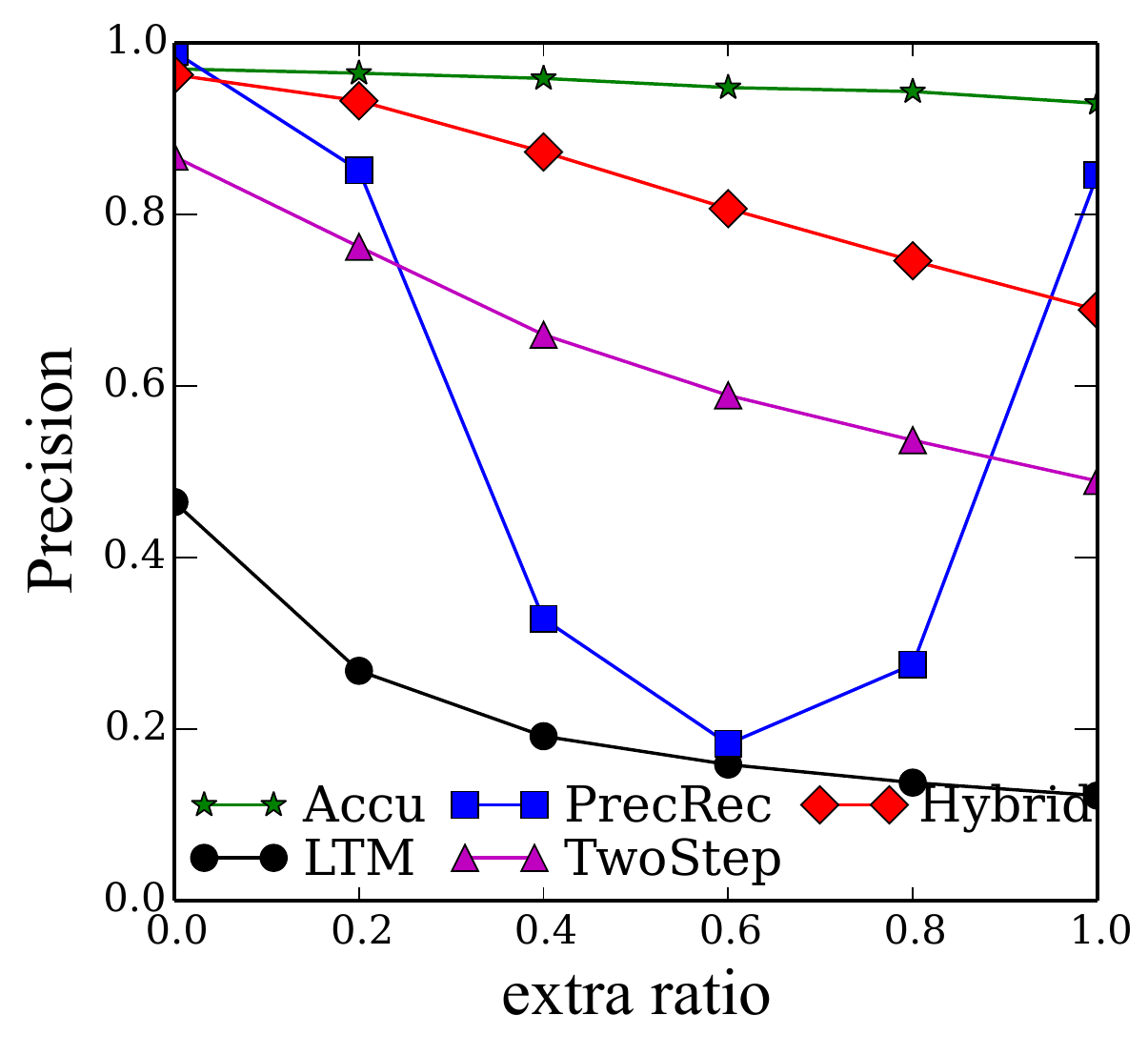} \quad\quad
		\includegraphics[width=.267\textwidth]{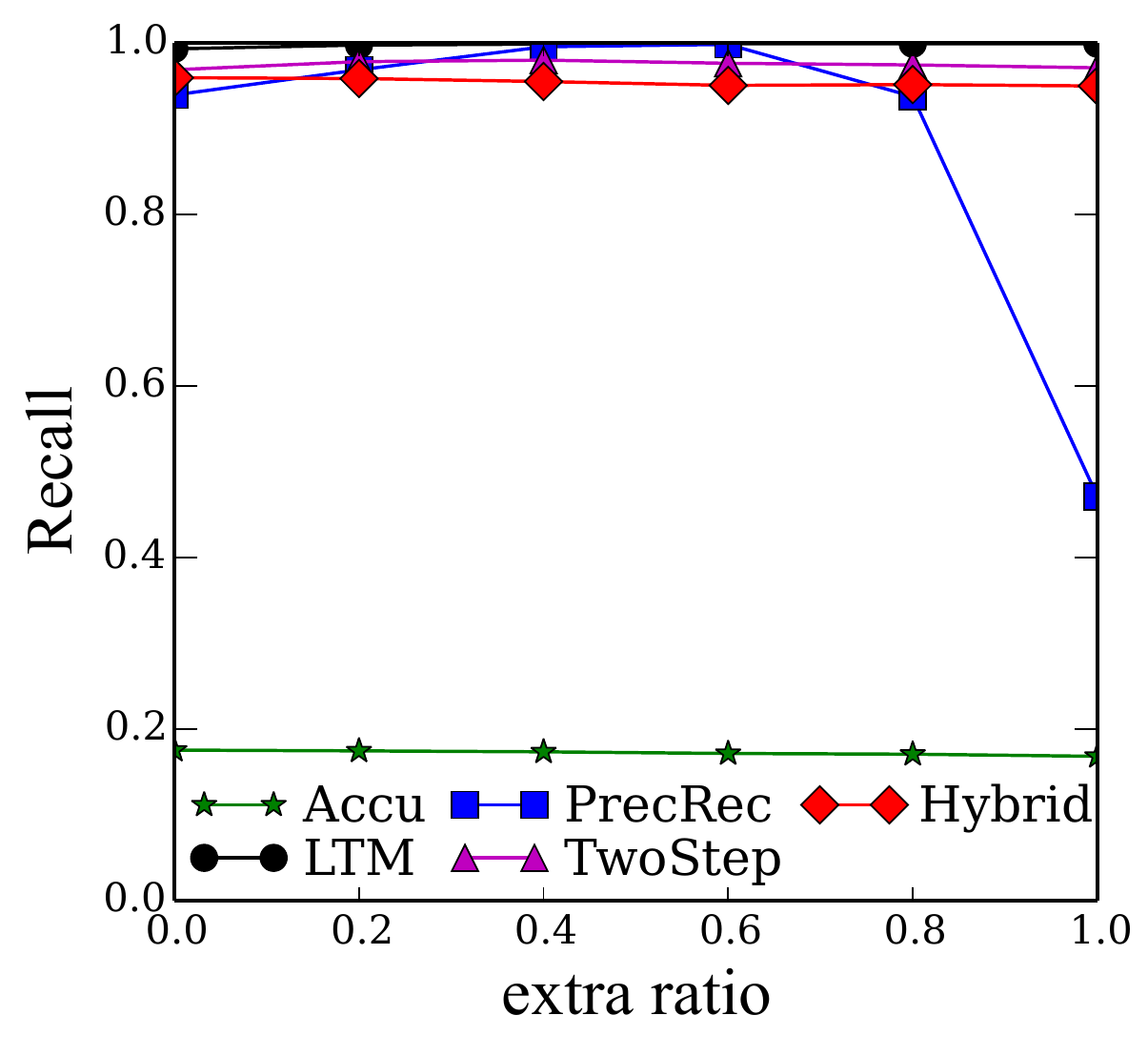}
		\caption{Varying extra ratio. {\sc Hybrid} is the most robust and outperforms the others.}
		\label{multi:fig:syn_vary_extra}
\end{figure*}

Figure~\ref{multi:fig:syn_vary_truthNum} shows the results when we vary the number of truths in data generation.
\model{} can fairly well ``guess'' the number of truths and consistently outperforms the others.
As the number of truths increases, the precision of {\sc Hybrid} remains high, while the precision of {\sc PrecRec} drops.
This is because the {\em extra ratio} is fixed; when there are more truths, there will be more wrong values and {\sc PrecRec} is more sensitive to noise.
Not surprisingly, {\sc Accu} always has the highest precision.
{\sc LTM} and {\sc TwoStep} have low precision: the former lacks a global view of values
for the same data item, while the latter may return false values with weak support. 

Figures~\ref{multi:fig:syn_vary_accuracy}-\ref{multi:fig:syn_vary_extra} plot the results of different methods when we vary source qualities.
As expected, as the quality of the sources drops, the quality of the fusion results drops as well. 
However, we observe that {\sc Hybrid} has the highest F-measure in general and it is the most robust.
It usually gives significantly higher precision than {\sc PrecRec} since it considers the conflicts between provided values as evidence to eliminate wrong values.
While most methods perform better when the source quality increases, {\sc PrecRec} obtains the worst results when the source quality is medium (0.4-0.6). 
This is because, when the sources have similar probability of providing a true value and providing a false value, {\sc PrecRec} is unable to distinguish them.

\section{Related Work}
\label{sec:related}

{\em Data fusion}~\cite{fusionsurvey,fusion-exp} refers to the problem of identifying the truths from different values provided by various sources.
We have provided a high-level review of different approaches in Section~\ref{multi:sec:intro}
and presented a comprehensive experimental study in Section~\ref{multi:sec:exp}.

A model called TEM~\cite{truthExistence} considers in addition whether the truth for a date item exists at all
(\ie, {\em date-of-death} does not exist for an alive person). It is mainly designed for
single-truth scenario.
The method in \cite{confidence} considers the case where most sources provide only
a few triples, thus source quality cannot be reliably estimated.
The {\sc Hybrid} model can be enhanced by these approaches.

\section{Conclusion}
\label{multi:sec:conclusion}

In this paper we present an approach to find true values for an entity from information provided by different sources.
It jointly makes two decisions on an entity: how many truths there are, and what they are. In this way, it allows the existence of multiple truths, while considering the conflicts between different values as important evidence for ruling out wrong values.
Extensive experiments on both real-world and synthetic data show that the proposed outperforms the state-of-the-art techniques, and it is able to obtain a high precision without sacrificing the recall much.

\balance

\bibliographystyle{abbrv} 

\bibliography{fact_check} \normalsize


\begin{appendix}
	
	\section{Proof of Theorem~4.2}
	\label{append:bound}
	
	\indent
	{\sc Theorem}~\ref{multi:the:bound}. {\em 
		Let $d$ be a data item and $n$ be the number of values provided for $d$.
		\begin{itemize}
			\item Algorithm~\ref{multi:algo:greedy} estimates the probability of each provided value 
			in time $O(n^2)$.
			\item For each value $v$ on $d$, we have $|p(v) - \hat{p}(v)| < \frac{1}{6}$,
			where $\hat{p}(v)$ is the exact probability computed by {\sc Hybrid}, 
			and $p(v)$ is the probability obtained by Algorithm~\ref{multi:algo:greedy}. \rbox
		\end{itemize}
	}
	
	\begin{proof}
		We first consider the time complexity of Algorithm~\ref{multi:algo:greedy}.
		Lines~\ref{begin:order}-\ref{end:order} have a complexity of $O(n \log n)$.
		The loops in Lines~\ref{begin:outer} and \ref{begin:valuePr} take $O(n^2)$ time.
		Therefore the overall complexity is $O(n^2)$.
		
		Next we prove the approximation bound of Algorithm~\ref{multi:algo:greedy}. 
		In this proof, we use a tree structure to illustrate the computations made by the full \model{} model; see Figure~\ref{fig:bound_example} as an example.
		The root of the tree represents having not selected any value. 
		A path from the root to a node $v$ represents a possible way of selecting $v$;
		for instance, the path $v_1$-$v_2$-$v_3$ corresponds to the case where we select $v_3$ after selecting $v_1$ and  $v_2$ sequentially (\ie, $\observe = v_1v_2$).
		The children of a node represent candidates for the next truth.
		The number under each node $v$ is the conditional probability $p(v|\observe, \Psi)$.
		By multiplying the numbers along a path, we obtain the {\em probability of the path}. 
		The overall  $p(v|\Psi)$ is thus the sum of the probabilities of all paths ending with $v$. 
		
		\begin{figure}
			\centering
			\includegraphics[width=0.38\textwidth]{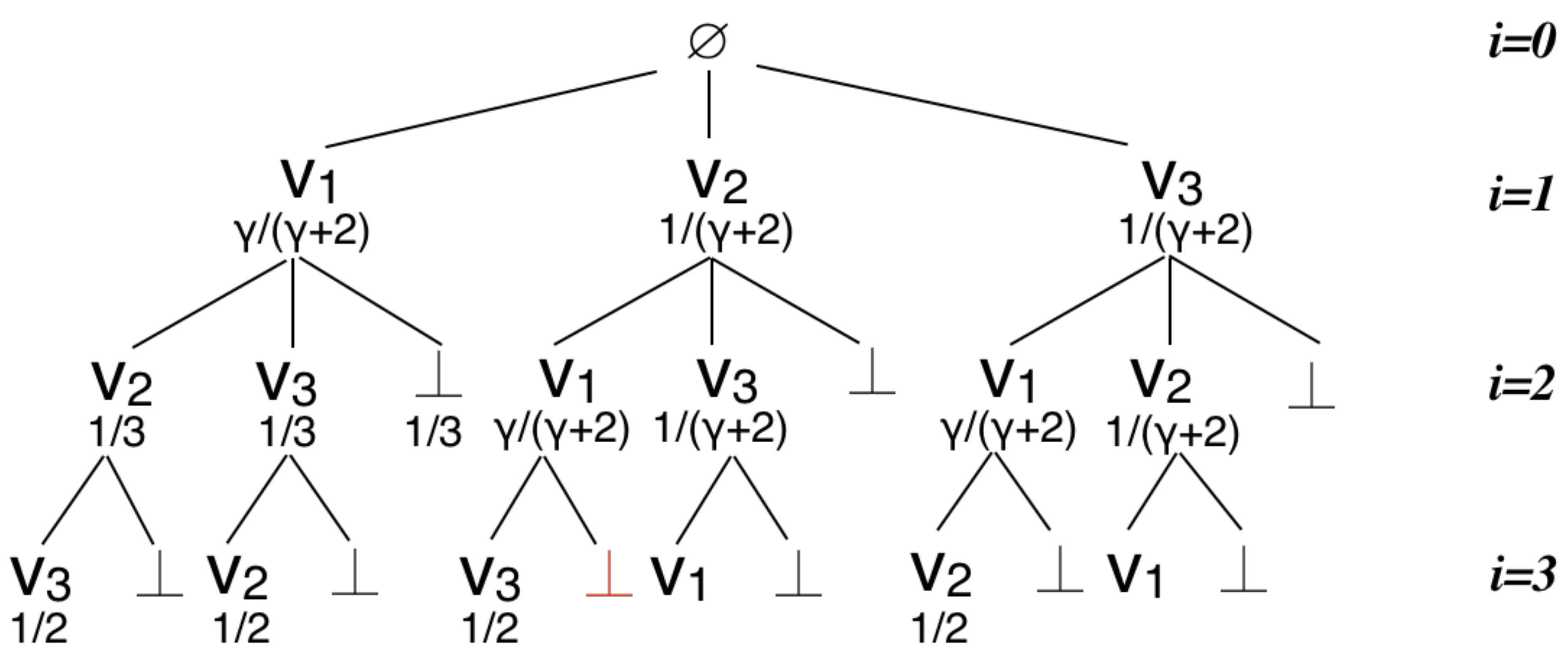}
			\caption{A tree structure to illustrate the full \model{} model.}
			\label{fig:bound_example}
		\end{figure}
		
		Algorithm~\ref{multi:algo:greedy} differs from the full \model{} model in two places:
		(1) it terminates early when $L_i(\perp) > L(v_i)$ without enumerating all possible worlds (Line~\ref{end:terminate}), and (2) it assumes that $v$ has the same conditional probability $p(v|\varGamma_i, \Psi)$ under all possible worlds with size $i-1$ (Line~\ref{end:valuePr}).
		The first one will make the approximated probability $p(v)$ lower than the exact probability $\hat p(v)$ under the full model, while the second one will lead to a higher $p(v)$.
		We next prove by constructing the worst case for each of them.
		
		\smallskip \noindent {\bf Case I:}
		Algorithm~\ref{multi:algo:greedy} terminates early such that $p(v) < \hat p(v)$. \\
		In this case the approximation error is due to the early termination: Algorithm~\ref{multi:algo:greedy} terminates at step $i{-}1$ without increasing $p(v)$ by $p_{j}(v)$ ($j \geq i$) in future steps.
		It is easy to check that Algorithm~\ref{multi:algo:greedy} outputs the same probabilities as the full model if the number of provided values is less than 3.
		So we require at least three values in the domain.
		The earliest possible termination is at step 2 (\ie, $i=3$).
		
		Next we construct a case with three values and Algorithm~\ref{multi:algo:greedy} terminates at step 2, such that the approximation error is reflected in one step (which is $p_3(v)$).
		By definition we have $L(v_3) \leq L(v_2)$ and $L_3(\perp) \geq L_2({\perp})$; to terminate at step 2, we need $L(v_2) < L_2({\perp})$.
		It is easy to see that among the tree values, $v_3$ has the largest probability at step 3.
		To maximize the approximation error cased by $p_3(v_3)$, we need $L(v_3)$ to be maximized and $L_2({\perp})$ as well as $L_3({\perp})$ to be minimized. 
		
		Suppose $L(v_2) = a$, we then have $L(v_3) = L(v_2) = a$; let $\gamma$ be a real number where $\gamma \geq 1$, we have $L(v_1) = \gamma {\cdot}a$.
		Further, let $L_2(\perp) = L_3(\perp) = a + \epsilon$ where $\epsilon$ is a very small constant. As usual, we have $L_1(\perp) = 0$.
		With the above setting, we compute all conditional probabilities following Section~\ref{multi:sec:cond_pr} and illustrate in Figure~\ref{fig:bound_example} (we omit $\epsilon$).
		
		Next we compute the overall probability for $v_3$ using the full \model{} model and Algorithm~\ref{multi:algo:greedy} respectively.
		
		\noindent For the full \model{} model, we find all paths ending with $v_3$ at each level of the tree:
		
		Level 1: $\hat p_1(v_3) = \frac{1}{\gamma+2}$;
		
		Level 2: $\hat p_2(v_3) = \frac{\gamma}{\gamma+2} \cdot \frac{1}{3} + \frac{1}{\gamma+2} \cdot \frac{1}{\gamma+2}$;
		
		Level 3: $\hat p_3(v_3) =  \frac{\gamma}{\gamma+2} \cdot \frac{1}{3} \cdot \frac{1}{2}+ \frac{1}{\gamma+2} \cdot \frac{\gamma}{\gamma+2}\cdot \frac{1}{2}$;
		
		Finally, $\hat p(v_3) = \hat p_1(v_3) + \hat p_2(v_3)  + \hat p_3(v_3) = \frac{1}{\gamma+2} + \frac{\gamma+1}{2(\gamma+2)}$.
		
		\noindent For Algorithm~\ref{multi:algo:greedy}, it terminates after 2 steps:
		
		When $i=1$: $p_1(v_3) = \frac{1}{\gamma+2}$;
		
		When $i=2$: $p_2(v_3) = (1-\frac{1}{\gamma+2}) \times \frac{1}{3} = \frac{\gamma+1}{3(\gamma+2)}$;
		
		Finally, $p(v_3) = p_1(v_3) + p_2(v_3) = \frac{1}{\gamma+2} + \frac{\gamma+1}{3(\gamma+2)}$.
		
		\noindent Therefore
		$\hat p(v_3) -  p(v_3) = \frac{1}{6} \cdot \frac{\gamma+1}{\gamma+2} < \frac{1}{6}$.
		
		\smallskip \noindent {\bf Case II:}
		We assume the same conditional probability under all possible worlds such that $p(v) > \hat p(v)$.
		
		Recall that Eq.~\eqref{multi:eq:upper_v} assumes that in each step $i$, $v$ has the same conditional probability $p(v|\Gamma_i, \Psi)$ under all possible worlds where $v$ is not present yet. 
		This $p(v|\Gamma_i, \Psi)$ is an upper bound of the real conditional probability; we obtain the largest difference between $p(v|\Gamma_i, \Psi)$ and the real conditional probability when the possible worlds end with $\perp$ (the real probability is 0). 
		We thus construct a case where $p_2(\perp)  p(v_3|\Gamma_3, \Psi)$ is maximized, leading to an over-estimation of $p_3(v_3)$.
		
		Similar to the previous case, we assume $L(v_1) = \gamma{\cdot} a$ and $L(v_2) = a$.
		In this case we want Algorithm~\ref{multi:algo:greedy} to continue when $i=3$, so that $p_3(v_3)$ is added to the overall probability of $v_3$.
		Therefore we require $L_2(\perp) \leq L(v_2) $;
		to make $p_2(\perp)$ larger, we have $L_2(\perp) = L(v_2) = a$.
		Given that $L(v_3) \geq L(v_2)$, to maximize $p(v_3|\Gamma_3, \Psi)$, we need $L(v_3) = L(v_2) = a$. 
		
		With the above setting, the probabilities computed by the full model remain the same, but Algorithm~\ref{multi:algo:greedy} continues when $i=3$:
		
		$p_3(v_3) = (1 - \frac{1}{\gamma+2} - (1-\frac{1}{\gamma+2}) \times \frac{1}{3}) \times \frac{1}{2} = \frac{\gamma+1}{3(\gamma+2)}$;
		
		$p(v_3) = p_1(v_3) + p_2(v_3)  + p_3(v_3) = \frac{1}{\gamma+2} + \frac{\gamma+1}{3(\gamma+2)} + \frac{\gamma+1}{3(\gamma+2)} = \frac{1}{\gamma+2} + \frac{2(\gamma+1)}{3(\gamma+2)}$.
		
		Therefore $p(v_3) - \hat p(v_3) = \frac{1}{6} \cdot \frac{\gamma+1}{\gamma+2}< \frac{1}{6}$.
		
		\noindent Combining the two cases, we have $|p(v) - \hat{p}(v)| < \frac{1}{6}$. 
	\end{proof}
	
\end{appendix}

\end{document}